\newtheorem{thm}{Theorem}[section]
\newtheorem{lem}{Lemma}[section]
\newtheorem{sublemma-lem}{}[lem]
\newtheorem{sublemma}{}[thm]
\newtheorem{proposition}{Proposition}[section]
\newtheorem{sublemma-prop}{}[proposition]
\newtheorem{obs}{Observation}[section]
\newtheorem{cor}{Corollary}[section]
\title{Placing quantified variants of \textsc{3-SAT} and \textsc{Not-All-Equal 3-SAT} in the polynomial hierarchy}
\author{Janosch D\"{o}cker, Britta Dorn, Simone Linz, and Charles Semple}
\date{\today}
\begin{document}

\maketitle

\begin{abstract}
The complexity of variants of 3-SAT and \textsc{Not-All-Equal 3-SAT} is well studied. However, in contrast, very little is known about the complexity of the problems' quantified counterparts. In the first part of this paper, we show that  \textsc{$\forall \exists$~3-SAT}  is $\Pi_2^P$-complete even if (1) each variable appears exactly twice unnegated and exactly twice negated, (2) each clause is a disjunction of exactly three distinct variables, and (3) the number of universal variables is equal to the number of existential variables. Furthermore, we show that the problem remains $\Pi_2^P$-complete if (1a) each universal variable appears exactly once unnegated and exactly once negated, (1b) each existential variable appears exactly twice unnegated and exactly twice negated, and (2) and (3) remain unchanged. On the other hand, the problem becomes NP-complete for certain variants in which each universal variable appears exactly once. In the second part of the paper, we establish $\Pi_2^P$-completeness for  \textsc{$\forall \exists$ Not-All-Equal 3-SAT} even if (1') the Boolean formula is linear and monotone, (2') each universal variable appears exactly once and each existential variable appears exactly three times, and (3') each clause is a disjunction of exactly three distinct variables that contains at most one universal variable. On the positive side, we uncover variants of \textsc{$\forall \exists$ Not-All-Equal 3-SAT} that are co-NP-complete or solvable in polynomial time.
\end{abstract}

\noindent{\bf Keywords:} 3-Sat, Not-All-Equal 3-Sat, quantified satisfiability, polynomial hierarchy, bounded variable appearances, computational complexity.

\section{Introduction}
The Boolean satisfiability problem \textsc{SAT} plays a major role in the theory of NP-completeness. It was the first problem shown to be complete for the class NP (Cook's Theorem~\cite{cook71}) and many NP-hardness results are established by using this problem, or restricted variants thereof, as a starting point for polynomial-time reductions. Restricted variants of a problem that remain NP-complete are particularly useful as they allow for the possibility of simpler proofs and stronger results.

The most prominent NP-complete variant of the Boolean satisfiability problem is \textsc{3-SAT}. Here we are given a conjunction of clauses  such that each clause contains exactly three literals, where a literal is a propositional variable or its negation. An instance of \textsc{3-SAT} is a yes-instance if there is a truth assignment to the propositional variables\footnote{From now on, we simply say variable instead of propositional variable since all variables used in the paper take only values representing true and false.} such that at least one literal of each clause evaluates to true. Interestingly, even within {\sc 3-SAT}, we can restrict the problem further. For example, for instances of 3-SAT in which each clause contains exactly three distinct variables, Tovey~\cite[Theorem 2.3]{tovey84} proved that \textsc{3-SAT} remains NP-complete if each variable appears in at most four clauses. Furthermore, this result also holds if each variable appears exactly twice unnegated and exactly twice negated~\cite[Theorem 1]{berman03}. On the other hand, the problem becomes trivial, i.e., the answer is always yes, if each variable appears at most three times~\cite[Theorem 2.4]{tovey84}.

A popular NP-complete variant of \textsc{3-SAT} called \textsc{Not-All-Equal 3-SAT} (\textsc{NAE-3-SAT}) asks whether we can assign truth values to the variables such that at least one, but not all, of the literals of each clause evaluate to true. Schaefer~\cite{schaefer78} first established NP-completeness of NAE-3-SAT in the setting where each clause contains at most three literals. Subsequently, Karpinski and Piecuch~\cite{karpinski17,karpinski18} showed that \textsc{NAE-3-SAT} is NP-complete if each variable appears at most four times. Furthermore, Porschen et al.~\cite[Theorem 3]{porschen14} showed that \textsc{NAE-3-SAT} remains NP-complete if (i) each literal appears at most once in any clause, and (ii) the input formula is {\it linear} and {\it monotone}, that is, each pair of distinct clauses share at most one variable and no clause contains a literal that is the negation of some variable. Following on from this last result, Darmann and D\"ocker~\cite{darmann19} showed recently that \textsc{NAE-3-SAT} remains NP-complete if, in addition to (i) and (ii), each variable appears exactly four times. By contrast, if a monotone conjunction of clauses has the property that each variable appears at most  three times, \textsc{NAE-3-SAT} can be decided in linear time~\cite[Theorem 4, p.\,186]{porschen04}. 

In this paper, we consider generalized variants of \textsc{3-SAT} and \textsc{NAE-3-SAT}, namely \textsc{$\forall \exists$ 3-SAT} and \textsc{$\forall \exists$ NAE-3-SAT}, respectively. Briefly, {\sc $\forall \exists$ 3-SAT} is a quantified variant of {\sc 3-SAT}, where each variable is either {\it universal} or {\it existential}. The decision problem \textsc{$\forall \exists$ 3-SAT} asks if, for every assignment of truth values to the universal variables, there exists an assignment of truth values to the existential variables such that at least one literal of each clause evaluates to true. Observe that, if an instance of \textsc{$\forall \exists$ 3-SAT} does not contain a universal variable, then this instance reduces to an instance of {\sc 3-SAT}. Analogously, we can think of {\sc $\forall \exists$ NAE-3-SAT} as a generalized variant of \textsc{NAE-3-SAT}. Formal definitions of both problems are given in the next section.

Stockmeyer~\cite{stockmeyer76} and Dahlhaus et al.~\cite{eiter95} showed, respectively, that {\sc $\forall \exists$ 3-SAT} and {\sc $\forall \exists$ NAE-3-SAT} are complete for the second level of the polynomial hierarchy or, more precisely, complete for the class $\Pi_2^P$. In this paper, we establish $\Pi_2^P$-completeness for restricted variants of these two quantified problems. In particular, we show that \textsc{$\forall \exists$ 3-SAT} is $\Pi_2^P$-complete if each universal variable appears exactly once unnegated and exactly once negated, and each existential variable appears exactly twice unnegated and exactly once negated or each existential variable appears exactly once unnegated and exactly twice negated. Although we do not consider approximation aspects in this paper, by way of comparison, Haviv et al.~\cite{haviv07} showed that approximating a particular optimization version of \textsc{$\forall \exists$ 3-SAT} is $\Pi_2^P$-hard even if each universal variable appears at most twice and each existential variable appears at most three times. Whether optimization versions of the $\Pi_2^P$-complete problems presented in this paper are $\Pi_2^P$-hard to approximate is a question that we leave for future research. 
Furthermore, we establish $\Pi_2^P$-completeness for \textsc{$\forall \exists$ 3-SAT} if each universal variable appears exactly $s_1$ times unnegated and exactly $s_2$ times negated, each existential variable appears exactly $t_1$ times unnegated and exactly $t_2$ times negated, and the following three properties are satisfied: (i) $s_1=s_2$, (ii) $s_1\in\{1,2\}$, and (iii) $t_1=t_2=2$. These latter completeness results hold even if each clause is a disjunction of exactly three distinct variables and the number of universal and existential variables is {\it balanced}, that is, the number of universal and existential variables are the same.

Turning to \textsc{$\forall \exists$ NAE-3-SAT}, we show that the problem remains $\Pi_2^P$-complete if each universal variable appears exactly once, each clause contains at most one universal variable, each existential variable appears exactly three times, and the conjunction of clauses is both linear and monotone. Interestingly, while one appearance of each universal variable is enough to obtain a $\Pi_2^P$-hardness result in this setting, the same is not true for \textsc{$\forall \exists$ 3-SAT} unless the polynomial hierarchy collapses~\cite[p.\,55]{haviv07}.

The remainder of the paper is organized as follows. The next section introduces notation and terminology, and formally states  three variants of {\sc $\forall \exists$ 3-SAT} and {\sc $\forall \exists$ Not-All-Equal 3-SAT} that are the main focus of this paper. Section 3 (resp. Section 4) investigates the computational complexity of {\sc $\forall \exists$ 3-SAT} (resp. {\sc $\forall \exists$ Not-All-Equal 3-SAT).} Both of Sections 3 and 4 start with a subsection on enforcers that are needed for the subsequent hardness proofs and, in terms of future work, we expect to be of independent interest in their own right.

\section{Preliminaries} \label{sec:prelim}
This section introduces notation and terminology that is used throughout the paper.

Let $V = \{x_1, x_2, \ldots, x_n\}$ be a set of variables. A \emph{literal} is a variable or its negation. We denote the set  $\{x_i, \overline{x}_i: i\in \{1, 2, \ldots, n\}\}$ of all literals that correspond to elements in $V$ by $\mathcal{L}_V$. A \emph{clause} is a disjunction of a subset of $\mathcal{L}_V$. If a clause contains exactly $k$ distinct literals for $k\ge 1$, then it is called a {\em $k$-clause}. For example, $(x_1 \vee \bar{x}_2 \vee x_4)$ is a $3$-clause. A {\it Boolean formula in conjunctive normal form} (CNF) is a conjunction of clauses, i.e., an expression of the form $\varphi = \bigwedge_{j = 1}^m C_j$, where $C_j$ is a clause for all $j$. In what follows, we refer to a Boolean formula in conjunctive normal form simply as a {\it Boolean formula}. Now, let $\varphi$ be a Boolean formula. We say that $\varphi$ is \emph{linear} if any pair of distinct clauses share at most one variable and that it is {\em monotone} if no clause contains an element in $\{\overline{x}_1,\overline{x}_2,\ldots,\overline{x}_n\}$. Furthermore, if each clause contains at most $k$ literals, it is said to be in {\em $k$-CNF}. For each variable $x_i \in V$, we denote the number of appearances of $x_i$ in~$\varphi$ plus the number of appearances of $\overline{x}_i$ in~$\varphi$ by $a(x_i)$. A {\em variable assignment} or, equivalently, {\em truth assignment} for $V$ is a mapping $\beta \colon V \rightarrow \{T, F\}$, where~$T$ represents the truth value True and $F$ represents the truth value False. A truth assignment $\beta$ \emph{satisfies} $\varphi$ if at least one literal of each clause evaluates to $T$ under $\beta$. If there exists a truth assignment that satisfies~$\varphi$, we say that $\varphi$ is {\it satisfiable}. For a truth assignment $\beta$ that satisfies $\varphi$ and has the additional property that at least one literal of each clause evaluates to $F$, we say that $\beta$ \emph{nae-satisfies} $\varphi$. Lastly, let $V$ and $V'$ be two disjoint sets of variables, let $\beta$ be a truth assignment for $V$, and let $\beta'$ be a truth assignment for $V\cup V'$. We say that $\beta'$ {\it extends} $\beta$ (or, alternatively, that $\beta$ extends to $V\cup V'$) if $\beta(x_i)=\beta'(x_i)$ for each $x_i\in V$.

A {\em quantified Boolean formula $\Phi$} over a set $V=\{x_1, x_2, \ldots, x_n\}$ of variables is a formula of the form
$$\forall x_1 \cdots \forall x_p \exists x_{p+1} \cdots \exists x_n \bigwedge_{j=1}^m C_j.$$  
The variables $x_1, x_2, \ldots, x_p$ are  \emph{universal} variables of $\Phi$ and the variables $x_{p+1}, x_{p+2}, \ldots, x_n$ are  \emph{existential} variables of $\Phi$. Furthermore, for variables $x_{i}, x_{i+1}, \ldots, x_{i'}$ with $1 \leq i < i' \leq p$ and $x_{i''}, x_{i''+1}, \ldots, x_{i'''}$ with $p+1 \leq i'' < i''' \leq n$, we define
 \[
\forall X_i^{i'} := \forall x_i \cdots \forall x_{i'}  \text{\hspace{3mm}and\hspace{3mm}} \exists X_{i''}^{i'''} := \exists x_{i''} \cdots \exists x_{i'''},
\]
respectively and, similarly,
 \[
X_i^{i'} := \{x_i, \ldots,  x_{i'}\}  \text{\hspace{3mm}and\hspace{3mm}} X_{i''}^{i'''} := \{x_{i''} , \ldots, x_{i'''}\},
\]
respectively.

We next introduce notation that transforms a Boolean formula $\varphi$ into another such formula. Specifically, we use $\varphi[x \mapsto y]$ to denote the Boolean formula obtained from $\varphi$ by replacing each appearance of variable $x$ with variable $y$ (i.e., replace $x$ with $y$ and replace $\overline{x}$ with $\overline{y}$). For pairwise distinct pairs $(x_1, y_1), (x_2, y_2), \ldots, (x_k, y_k)$ of variables, we use $\varphi[x_1 \mapsto y_1, \ldots, x_k \mapsto y_k]$ to denote the Boolean formula obtained from $\varphi$ by simultaneously replacing each appearance of variable $x_i$ by variable $y_i$ for $1 \leq i \leq k$. Since the variables are pairwise distinct, note that this operation is well-defined.  Lastly, for a constant $b \in \{T, F\}$, the Boolean formula $\varphi[x \mapsto b]$ is obtained from $\varphi$ by replacing each appearance of variable $x$ by $b$. \\
  
\noindent {\bf The polynomial hierarchy.} An {\em oracle} for a complexity class $A$ is a black box that, in constant time, outputs the answer to any given instance of a decision problem contained in $A$. The {\em polynomial hierarchy} is a system of nested complexity classes that are defined recursively as follows. Set
\[\Sigma_0^P = \Pi_0^P = \text{P},\]
and define, for all $k\ge 0$,
\[\Sigma_{k+1}^P = \text{NP}^{\Sigma_k^P}\text{\hspace{3mm}and\hspace{3mm}}\Pi_{k+1}^P = \text{co-NP}^{\Sigma_k^P},\]
where a problem is in $\text{NP}^{\Sigma_k^P}$ (resp.\ $\text{co-NP}^{\Sigma_k^P}$) if we can verify an appropriate certificate of a yes-instance (resp.\ no-instance) in polynomial time when given access to an oracle for $\Sigma_k^P$. By definition, $\Sigma_1^P = \text{NP}$ and $\Pi_1^P = \text{co-NP}$. We say that the classes $\Sigma_k^P$ and $\Pi_k^P$ are on the {\em $k$-th level} of the polynomial hierarchy.

For all $k \geq 0$, there are complete problems under polynomial-time many-one reductions for $\Sigma_k^P$ and $\Pi_k^P$. However, while, for example, the complexity class $\Pi_2^P$ generalizes both NP and co-NP, it remains an open question whether $\Sigma_k^P \neq \Sigma_{k+1}^P$ or $\Pi_k^P \neq \Pi_{k+1}^P$ for any $k \geq 0$. For further details of the polynomial hierarchy, we refer the interested reader to Garey and Johnson's book~\cite{garey79}, an article by Stockmeyer~\cite{stockmeyer76}, as well as to the compendium by Schaefer and Umans~\cite{schaefer02} for a collection of problems that are complete  for the second or higher levels of the polynomial hierarchy.\\

The following two $\Pi_2^P$-complete problems  are the starting points for the work presented in this paper.

\noindent {\sc $\forall \exists$ 3-SAT}\\
\noindent{\bf Input.} A quantified Boolean formula
$$\forall x_1 \cdots \forall x_p \exists x_{p+1} \cdots \exists x_n \bigwedge_{j=1}^m C_j$$
over a set  $V=\{x_1,x_2,\ldots,x_n\}$ of variables, where each clause $C_j$ is a disjunction of at most three literals.
\\
\noindent{\bf Question.} For every truth assignment for $\{x_1, x_2, \ldots, x_p\}$, does there exist a truth assignment for $\{x_{p+1}, x_{p+2}, \ldots, x_n\}$ such that each clause of the formula is satisfied?

\noindent {\sc $\forall \exists$ Not-All-Equal 3-SAT} ({\sc $\forall \exists$ NAE-3-SAT})\\
\noindent{\bf Input.} A quantified Boolean formula
$$\forall x_1 \cdots \forall x_p \exists x_{p+1} \cdots \exists x_n \bigwedge_{j=1}^m C_j$$
over a set  $V=\{x_1,x_2,\ldots,x_n\}$ of variables, where each clause $C_j$ is a disjunction of at most three literals.
\\
\noindent{\bf Question.} For every truth assignment for $\{x_1, x_2, \ldots, x_p\}$, does there exist a truth assignment for $\{x_{p+1}, x_{p+2}, \ldots, x_n\}$ such that each clause of the formula is nae-satisfied?

\noindent Stockmeyer~\cite{stockmeyer76}, and Eiter and Gottlob~\cite{eiter95} established $\Pi_2^P$-completeness for {\sc $\forall \exists$ 3-SAT} and {\sc $\forall \exists$ NAE-3-SAT}, respectively.

The main focus of this paper are the following three restricted variants of {\sc $\forall \exists$ 3-SAT} and {\sc $\forall \exists$ NAE-3-SAT}.

\noindent {\sc Balanced $\forall \exists$ 3-SAT-$(s_1,s_2, t_1, t_2)$}\\
\noindent{\bf Input.} Four non-negative integers $s_1, s_2, t_1,t_2$, and a quantified Boolean formula
$$\forall x_1 \cdots \forall x_p \exists x_{p+1} \cdots \exists x_n \bigwedge_{j=1}^m C_j$$
over a set  $V=\{x_1,x_2,\ldots,x_n\}$ of variables such that (i) $n = 2p$, (ii) each $C_j$ is a 3-clause that contains three \emph{distinct} variables, and (iii), amongst the clauses, each universal variable appears unnegated exactly $s_1$ times and negated exactly $s_2$ times, and each existential variable appears unnegated exactly $t_1$ times and negated exactly $t_2$ times.
\\
\noindent{\bf Question.} For every truth assignment for $\{x_1, x_2, \ldots, x_p\}$, does there exist a truth assignment for $\{x_{p+1}, x_{p+2}, \ldots, x_n\}$ such that each clause of the formula is satisfied?

\noindent In addition, we also consider the decision problem that is obtained from {\sc Balanced $\forall \exists$ 3-SAT-$(s_1,s_2, t_1, t_2)$} by omitting property (i) in the statement of the input. We refer to the resulting problem as {\sc $\forall \exists$ 3-SAT-$(s_1,s_2, t_1, t_2)$}. Lastly, we consider the following problem.

\noindent {\sc Monotone $\forall \exists$ NAE-3-SAT-$(s, t)$}\\
\noindent{\bf Input.} Two non-negative integers $s$ and $t$, and a monotone quantified Boolean formula
$$\forall x_1 \cdots \forall x_p \exists x_{p+1} \cdots \exists x_n \bigwedge_{j=1}^m C_j$$
over a set  $V=\{x_1,x_2,\ldots,x_n\}$ of variables such that (i) each $C_j$ is a 3-clause that contains  three \emph{distinct} variables and (ii), amongst the clauses, each universal variable appears exactly $s$ times and each existential variable appears exactly $t$ times. \\
\noindent{\bf Question.} For every truth assignment for $\{x_1, x_2, \ldots, x_p\}$, does there exist a truth assignment for $\{x_{p+1}, x_{p+2}, \ldots, x_n\}$ such that each clause of the formula is nae-satisfied?

\noindent {\bf Enforcers.} To establish the results of this paper, we will frequently use the concept of enforcers. An \emph{enforcer} (sometimes also called a {\it gadget})~\cite{berman03} is a Boolean formula, where the formula itself and each truth assignment that satisfies it have a certain structure. Enforcers are used in  polynomial-time reductions to add additional restrictions on how yes-instances can be obtained.

We next detail two unquantified enforcers that were introduced by Ber\-man et al.~\cite[p.\,3]{berman03} and that lay the foundation for several other enforcers that are new to this paper and will be introduced in the following sections. First, let $\ell_1,\ell_2$ and $\ell_3$ be three, not necessarily distinct, literals. Without loss of generality, we may assume that $\ell_1\in\{x_1,\overline{x}_1\}$, $\ell_2\in\{x_2,\overline{x}_2\}$, and $\ell_3\in\{x_3,\overline{x}_3\}$. Now consider the following enforcer to which we  refer to as  $\mathcal{S}$-{\it enforcer}:
\begin{align*}
\mathcal{S}(\ell_1, \ell_2, \ell_3) = &(\ell_1 \vee \overline{a} \vee b) \wedge (\ell_2 \vee \overline{b} \vee c) \wedge (\ell_3 \vee a \vee \overline{c}) \wedge {} \\ 
&(a \vee b \vee c) \wedge (\overline{a} \vee \overline{b} \vee \overline{c}), 
\end{align*}
where $a$, $b$, and $c$ are new variables such that $\{x_1,x_2,x_3\}\cap\{a,b,c\}=\emptyset$. Let $\beta \colon \{x_1,x_2,x_3,a,b,c\} \rightarrow \{T, F\}$ be a truth assignment. The next observation is an immediate consequence from the fact that, if $\beta(\ell_1)=\beta(\ell_2)=\beta(\ell_3)=F$, then, as the first three clauses form a cyclic implication chain which can only be satisfied by setting $\beta(a)=\beta(b)=\beta(c)$, either the fourth or fifth clause is not satisfied.

\begin{obs}\label{ob:enforcer1}
Consider the boolean formula $\mathcal{S}(\ell_1, \ell_2, \ell_3)$, where $\ell_i \in \{x_i, \overline{x_i}\}$, and let $V$ be its associated set of variables. A truth assignment $\beta$ for the variables $\{x_1, x_2, x_3\}$ can be extended to a truth assignment $\beta'$ for $V$ that satisfies $\mathcal{S}(\ell_1, \ell_2, \ell_3)$ if and only if $\beta(\ell_i)=T$ for some $i\in\{1,2,3\}$.
\end{obs}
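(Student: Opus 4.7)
The plan is to split into the two directions of the biconditional and argue each cleanly.

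For the only-if direction, I would argue by contrapositive: assume $\beta(\ell_i)=F$ for all $i\in\{1,2,3\}$ and show that no extension of $\beta$ can satisfy $\mathcal{S}(\ell_1,\ell_2,\ell_3)$. Under this assumption, the first three clauses collapse to $(\overline{a}\vee b)$, $(\overline{b}\vee c)$, and $(a\vee \overline{c})$, which are logically equivalent to the implications $a\to b$, $b\to c$, and $c\to a$. Any truth assignment that satisfies all three must therefore give $a$, $b$, and $c$ the same truth value, i.e., either $\beta'(a)=\beta'(b)=\beta'(c)=T$ or $\beta'(a)=\beta'(b)=\beta'(c)=F$. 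The former contradicts the clause $(\overline{a}\vee \overline{b}\vee \overline{c})$ and the latter contradicts the clause $(a\vee b\vee c)$, so no satisfying extension exists.

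For the if direction, I would handle the three cases $\beta(\ell_1)=T$, $\beta(\ell_2)=T$, and $\beta(\ell_3)=T$ by exhibiting an explicit assignment to $\{a,b,c\}$ that extends $\beta$ to a satisfying assignment of $\mathcal{S}(\ell_1,\ell_2,\ell_3)$. The natural choices are to set $(a,b,c)=(T,F,T)$ when $\beta(\ell_1)=T$, to set $(a,b,c)=(T,T,F)$ when $\beta(\ell_2)=T$, and to set $(a,b,c)=(F,T,T)$ when $\beta(\ell_3)=T$. In each case, exactly one of $a,b,c$ is set differently from the other two, so the clauses $(a\vee b\vee c)$ and $(\overline{a}\vee\overline{b}\vee\overline{c})$ are both satisfied; a short case-check confirms that the two remaining clauses among the first three are also satisfied by the literals among $\overline{a},b,\overline{b},c,a,\overline{c}$, independently of the values of $\ell_1,\ell_2,\ell_3$. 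The clause associated with the literal that $\beta$ already makes $T$ is, of course, trivially satisfied.

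There is no real obstacle here; the only thing to be careful about is picking the three extensions so that the cyclic structure of clauses~1--3 is respected without relying on the values of the $\ell_i$ other than the one assumed to be $T$. Since each of the three extensions above assigns two of $\{a,b,c\}$ to the same value and the third to the opposite value, every one of the first three clauses contains at least one true literal from $\{\overline{a},b\}$, $\{\overline{b},c\}$, or $\{a,\overline{c}\}$ by direct inspection, so the argument goes through uniformly.
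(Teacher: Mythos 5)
Your proof is correct and follows essentially the same route as the paper, which justifies the observation by exactly your contrapositive argument (the first three clauses form a cyclic implication chain forcing $a=b=c$, contradicting one of the last two clauses) and leaves the converse direction implicit; your explicit witnesses $(T,F,T)$, $(T,T,F)$, $(F,T,T)$ all check out. The only blemish is the final sentence's claim that \emph{every} one of the first three clauses has a true literal among the auxiliary ones, which fails for the clause containing the true $\ell_i$ (e.g., in the case $\beta(\ell_1)=T$ with $(a,b,c)=(T,F,T)$ both $\overline{a}$ and $b$ are false), but your earlier, more careful phrasing already handles that clause via $\ell_i$ itself.
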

\noindent {\bf Remark.} We note that Observation~\ref{ob:enforcer1} holds, even if $x_1$ is a universal variable and all other variables in $\{x_2,x_3,a,b,c\}$ are existential in which case we will write $\mathcal{S}_u(\ell_1, \ell_2, \ell_3)$ to denote the gadget.

In what follows, we will use enforcers that are built of several copies of the $\mathcal{S}$-enforcer. In such a case, for each pair of $\mathcal{S}$-enforcer copies, the two 3-element sets of new variables are disjoint. 

Again following the constructions from Berman et al.~\cite{berman03}, consider a second enforcer:
\[
x^{(2)} = \mathcal{S}(x, y, y) \wedge \mathcal{S}(x, \bar{y}, \bar{y}).
\]
Note that $x^{(2)}$ is a Boolean formula over eight variables. Moreover, each clause contains three distinct variables since the copies of $y$ and $\bar{y}$ are distributed over different clauses in $\mathcal{S}(x, y, y)$ and $\mathcal{S}(x, \bar{y}, \bar{y})$, respectively. Lastly, each variable, except for $x$, appears exactly twice unnegated and twice negated in $x^{(2)}$. Now, the next observation follows by construction and Observation \ref{ob:enforcer1}.

\begin{obs}\label{ob:enforcer2}
Consider the Boolean formula $x^{(2)}$ over a set $V$ of eight variables, where $x,y\in V$. A truth assignment $\beta$ for $\{x\}$ can be extended to a truth assignment $\beta'$ for $V$ that satisfies $x^{(2)}$ if and only if $\beta(x) = T$.
\end{obs}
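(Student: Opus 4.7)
The plan is to prove the two directions of the biconditional separately, leaning entirely on Observation~\ref{ob:enforcer1} applied to each of the two $\mathcal{S}$-enforcer copies that constitute $x^{(2)} = \mathcal{S}(x, y, y) \wedge \mathcal{S}(x, \bar{y}, \bar{y})$. Crucially, by the convention stated just after Observation~\ref{ob:enforcer1}, the two copies employ disjoint sets of new auxiliary variables, say $\{a,b,c\}$ for the first copy and $\{a',b',c'\}$ for the second, so any extensions produced for the two copies can be combined without conflict.

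For the forward direction, I would assume $\beta(x)=T$ and exhibit an extension. Choose $\beta'(y)$ arbitrarily (for concreteness, $\beta'(y)=T$). For the first copy $\mathcal{S}(x,y,y)$ the literal $\ell_1=x$ satisfies $\beta'(\ell_1)=T$, so by Observation~\ref{ob:enforcer1} $\beta'$ extends to $\{a,b,c\}$ so as to satisfy it; the same holds for $\mathcal{S}(x,\bar y,\bar y)$ because $\ell_1=x$ is still $T$, so we extend to $\{a',b',c'\}$. Combining these assignments gives the desired $\beta'$ on all eight variables.

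For the backward direction, I would argue the contrapositive: suppose $\beta(x)=F$ and show no extension $\beta'$ satisfies $x^{(2)}$. Split on the value of $\beta'(y)$. If $\beta'(y)=T$, then in $\mathcal{S}(x,\bar y,\bar y)$ every literal $\ell_i$ evaluates to $F$ under $\beta'$, so Observation~\ref{ob:enforcer1} forbids any satisfying extension on $\{a',b',c'\}$. Symmetrically, if $\beta'(y)=F$, then every $\ell_i$ in $\mathcal{S}(x,y,y)$ is $F$, and Observation~\ref{ob:enforcer1} blocks satisfaction on $\{a,b,c\}$. Either case contradicts the assumption that $\beta'$ satisfies $x^{(2)}$.

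There is no substantive obstacle: the only minor point worth flagging explicitly is that the auxiliary variables of the two $\mathcal{S}$-copies are disjoint (so the two local extensions combine into a single consistent $\beta'$) and that $y$ is a free variable of $x^{(2)}$ that can be set as needed in the forward direction and must be case-split in the backward direction. With those noted, the argument reduces to two invocations of Observation~\ref{ob:enforcer1} in each direction.
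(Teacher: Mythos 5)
Your proof is correct and fills in exactly the argument the paper compresses into ``follows by construction and Observation~\ref{ob:enforcer1}'': two applications of that observation per direction, with the case split on $y$ in the backward direction and the disjointness of the auxiliary variable sets justifying the combination in the forward direction. No differences of substance from the paper's intended route.
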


\noindent We will use the $\mathcal{S}$-enforcer and $x^{(2)}$ as well as extensions thereof in the proofs of several results established in this paper. \\

\section{Hardness of balanced and unbalanced versions of \sc{$\forall \exists$ 3-SAT-$(s_1, s_2, t_1, t_2)$}} 

\subsection{New enforcers}\label{sec:enforcer}
We start by describing three new enforcers, with the first one building upon the enforcers introduced in the previous section. Consider the following gadget:
\begin{align*}
E(x) = \mathcal{S}(x, y, y) \wedge \mathcal{S}(x, \bar{y}, \bar{y}) \wedge \mathcal{S}(\bar{x}, z, \bar{z}) \wedge \mathcal{S}(z, \bar{z}, u) \wedge \mathcal{S}(u, \bar{u}, \bar{u})
\end{align*}
which is an extended variant of the enforcer $x^{(2)}$. We call $x$ the {\it enforcer variable} of $E(x)$. Note that every  variable in $\{u,y,z\}$ appears exactly twice unnegated and exactly twice negated in $E(x)$, and that $x$ appears exactly twice unnegated and exactly once negated in $E(x)$. Moreover, by construction and Observation~\ref{ob:enforcer2}, it follows that $E(x)$ is satisfiable by setting $x$ to be $T$, and by setting all remaining 18 variables appropriately. 
\begin{obs}\label{ob:E(x)}
Consider the gadget $E(x)$, and let $V$ be its associated set of variables. A truth assignment $\beta$ for $\{x\}$ can be extended to a truth assignment $\beta'$ for $V$ that satisfies $E(x)$ if and only if $\beta(x) = T$.
\end{obs}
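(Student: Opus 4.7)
The plan is to decompose $E(x)$ into its five constituent $\mathcal{S}$-enforcer copies and apply Observations~\ref{ob:enforcer1} and \ref{ob:enforcer2} to each. Crucially, by the convention stated just before the definition of $x^{(2)}$, the five copies of the $\mathcal{S}$-enforcer appearing in $E(x)$ introduce pairwise disjoint triples of new auxiliary variables, so their auxiliary variables can be assigned independently once the values of $x$, $y$, $z$, and $u$ are fixed.

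For the forward direction, I would simply point out that the first two conjuncts of $E(x)$, namely $\mathcal{S}(x,y,y) \wedge \mathcal{S}(x,\bar y,\bar y)$, are literally the formula $x^{(2)}$ of Observation~\ref{ob:enforcer2}. Hence any truth assignment $\beta'$ for $V$ that satisfies $E(x)$ in particular satisfies $x^{(2)}$, forcing $\beta'(x)=T$. Therefore $\beta(x) = T$ is necessary.

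For the backward direction, assume $\beta(x) = T$. Because the auxiliary variables of the five $\mathcal{S}$-enforcers are disjoint and the only shared variables are $x$, $y$, $z$, $u$, it suffices to exhibit values for $y, z, u$ so that for each of the five triples of literals at least one literal evaluates to $T$; Observation~\ref{ob:enforcer1} then guarantees, for each copy individually, an extension to its private auxiliary triple that satisfies that copy, and stitching these extensions together yields the required $\beta'$. Taking, say, $\beta'(y) = \beta'(z) = \beta'(u) = T$ works: $\mathcal{S}(x,y,y)$ and $\mathcal{S}(x,\bar y,\bar y)$ are each hit through $\beta'(x) = T$; $\mathcal{S}(\bar x, z, \bar z)$ and $\mathcal{S}(z,\bar z, u)$ are hit through $\beta'(z) = T$; and $\mathcal{S}(u,\bar u,\bar u)$ is hit through $\beta'(u) = T$.

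There is no real obstacle beyond this bookkeeping: the content of the lemma is entirely carried by Observations~\ref{ob:enforcer1} and \ref{ob:enforcer2} together with the disjointness of the auxiliary variable sets. The only mild subtlety is to make explicit that the extra conjuncts beyond $x^{(2)}$ (which are added so that $y, z, u$ each appear twice positively and twice negatively and so that $x$ has one negative appearance) do not spoil satisfiability when $x = T$; the explicit assignment above confirms this.
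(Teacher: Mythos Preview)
Your proposal is correct and follows essentially the same approach as the paper, which does not give a separate proof but merely notes before the observation that it follows ``by construction and Observation~\ref{ob:enforcer2}''. Your write-up simply makes this explicit: the forward direction is exactly Observation~\ref{ob:enforcer2} applied to the first two conjuncts, and the backward direction verifies satisfiability via a concrete assignment and Observation~\ref{ob:enforcer1}, which is precisely the intended reading of ``by construction''.
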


We now turn to two  quantified enforcers whose purpose is to increase the number of universal variables by one and three, respectively, relative to the number of existential variables. First, let 
\begin{align*}
Q^{1} = &(u \vee v \vee a) \wedge (u \vee v \vee b) \wedge (\overline{u} \vee \overline{v} \vee \overline{a}) \wedge (\overline{u} \vee \overline{v} \vee \overline{b}) \wedge {} \\
&(a \vee \overline{b} \vee r) \wedge (\overline{a} \vee b \vee r) \wedge (c \vee \overline{d} \vee \overline{r}) \wedge (\overline{c} \vee d \vee \overline{r}) \wedge {} \\
&(w \vee q \vee c) \wedge (w \vee q \vee d) \wedge (\overline{w} \vee \overline{q} \vee \overline{c}) \wedge (\overline{w} \vee \overline{q} \vee \overline{d}),
\end{align*}
 where $u, v, w, q, r$ are universal variables, and $a, b, c, d$ are existential variables. Observe that each variable of $Q^1$ appears exactly twice unnegated and exactly twice negated. Second, let 
\begin{align*}
Q^3 = &(u \vee r \vee a) \wedge (\overline{u} \vee \overline{b} \vee \overline{a}) \wedge (v \vee q \vee b) \wedge {} \\
&(\overline{v} \vee \overline{r} \vee \overline{a})  \wedge (w \vee a \vee b) \wedge (\overline{w} \vee \overline{q} \vee \overline{b}), 
\end{align*}
where $u, v, w, q, r$ are  universal variables and $a, b$ are  existential variables. Observe that each universal variable of $Q^3$ appears exactly once unnegated and exactly once negated, and that each existential variable of $Q^3$ appears exactly twice unnegated and exactly twice negated.

\begin{lem}\label{lem:q_forall}
The quantified Boolean formula $$\forall u \,\forall v \, \forall w \, \forall q \, \forall r \, \exists a \, \exists b \, \exists c \, \exists d \mbox{ }Q^1$$
is a yes-instance of \sc{$\forall \exists$ 3-SAT}.
\end{lem}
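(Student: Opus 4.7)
I would fix an arbitrary truth assignment $\beta$ to the universal variables $\{u,v,w,q,r\}$ and exhibit an extension $\beta'$ to $\{a,b,c,d\}$ that satisfies every clause of $Q^1$. The key structural observation I intend to exploit is that the twelve clauses split naturally into three blocks of four: the first block constrains the pair $(a,b)$ via $(u,v)$, the last block constrains $(c,d)$ via $(w,q)$, and the middle block, which is the only one involving $r$, couples $a$ to $b$ when $\beta(r)=F$ (through $(a \vee \overline{b} \vee r)\wedge(\overline{a}\vee b\vee r)$) and couples $c$ to $d$ when $\beta(r)=T$ (through the analogous pair). In particular, the first block forces $\beta'(a)=\beta'(b)=T$ when $\beta(u)=\beta(v)=F$ and forces $\beta'(a)=\beta'(b)=F$ when $\beta(u)=\beta(v)=T$, and imposes no constraint otherwise; the last block behaves symmetrically in $(w,q)$ and $(c,d)$.

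I then plan to make the uniform choice
\[
\beta'(a)=\beta'(b)=\neg\beta(u), \qquad \beta'(c)=\beta'(d)=\neg\beta(w),
\]
and verify all twelve clauses directly. Clauses $(u\vee v\vee a)$ and $(u\vee v\vee b)$ are satisfied by $u$ when $\beta(u)=T$ and by $a=b=T$ when $\beta(u)=F$, while $(\overline{u}\vee\overline{v}\vee\overline{a})$ and $(\overline{u}\vee\overline{v}\vee\overline{b})$ are handled symmetrically. The last four clauses can be checked identically with $w$ and $(c,d)$ in place of $u$ and $(a,b)$. For the middle block, since $\beta'(a)=\beta'(b)$, the literals $a$ and $\overline{b}$ take opposite values, so $(a\vee\overline{b}\vee r)$ contains a true literal irrespective of $\beta(r)$, and the same reasoning disposes of $(\overline{a}\vee b\vee r)$, $(c\vee\overline{d}\vee\overline{r})$, and $(\overline{c}\vee d\vee\overline{r})$.

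There is no substantive obstacle here; the only point requiring care is checking that the single uniform choice is simultaneously consistent with the first (respectively last) block of four clauses and with the corresponding pair of middle clauses, and the case analysis sketched above confirms this. Since $\beta$ was an arbitrary assignment to the universal variables, a satisfying extension to the existential variables exists in every case, which is precisely the statement that the quantified formula is a yes-instance of \textsc{$\forall \exists$ 3-SAT}.
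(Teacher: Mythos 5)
Your proposal is correct and uses exactly the same extension as the paper, namely $\beta'(a)=\beta'(b)=\overline{\beta(u)}$ and $\beta'(c)=\beta'(d)=\overline{\beta(w)}$; the paper simply leaves the clause-by-clause verification to the reader, which you carry out explicitly. No differences in approach.
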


\begin{proof}
Let $U = \{u, v, w, q, r\}$, and let $E = \{a, b, c, d\}$. Furthermore, let $\beta \colon U \rightarrow \{T, F\}$ be a truth assignment for $U$. We extend $\beta$ to $\beta' \colon U \cup E \rightarrow \{T, F\}$ as follows: 
\[
\beta'(a) = \beta'(b) = \overline{\beta(u)}, \quad \beta'(c) = \beta'(d) = \overline{\beta(w)}. 
\]
It is now easy to verify that $\beta'$ satisfies all clauses. Thus, $Q^1$ is a yes-instance of  \sc{$\forall \exists$ 3-SAT}.
\end{proof}

\begin{lem}\label{lem:q_forall-3}
The quantified Boolean formula $$\forall u \,\forall v \, \forall w \, \forall q \, \forall r \, \exists a \, \exists b \mbox{ }Q^3$$
is a yes-instance of \sc{$\forall \exists$ 3-SAT}.
\end{lem}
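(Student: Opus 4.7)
My plan is to mimic the style of Lemma~\ref{lem:q_forall}: given an arbitrary truth assignment $\beta$ of the universal variables $u,v,w,q,r$, I would exhibit an explicit extension $\beta'$ assigning values to the existential variables $a,b$, and then verify that all six clauses of $Q^3$ are satisfied.

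My first step would be to identify, for each clause, the condition under which it forces a specific value of $a$ or $b$. Clauses $1$ and $4$ are the only clauses containing $a$ but not $b$; clause $1$ forces $a=T$ when $\beta(u)=\beta(r)=F$ and clause $4$ forces $a=F$ when $\beta(v)=\beta(r)=T$, and these two conditions are mutually exclusive since they contradict on $r$. Symmetrically, clauses $3$ and $6$ contain $b$ but not $a$; clause $3$ forces $b=T$ when $\beta(v)=\beta(q)=F$ while clause $6$ forces $b=F$ when $\beta(w)=\beta(q)=T$, conditions that contradict on $q$. Hence the individual forcings on $a$ and on $b$ from clauses $1,3,4,6$ are always internally consistent.

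The main obstacle is the pair of clauses $2$ and $5$, each of which involves both existential variables: clause $2$ requires $a=F$ or $b=F$ whenever $\beta(u)=T$, while clause $5$ requires $a=T$ or $b=T$ whenever $\beta(w)=F$. I would check case-by-case that no combination of forcings from clauses $1,3,4,6$ conflicts with these joint constraints. In each potentially problematic combination, either the triggering conditions contradict on some universal variable (e.g., clause $2$ needs $\beta(u)=T$, ruling out the clause $1$ forcing $\beta(u)=F$), or the non-forced existential variable remains free and can be set to absorb the joint constraint. Guided by this analysis, I would then propose the explicit extension
\[
\beta'(a) \;=\; \overline{\beta(u)} \,\wedge\, \bigl(\overline{\beta(v)} \,\vee\, \overline{\beta(r)}\bigr), \qquad \beta'(b) \;=\; \overline{\beta(w)} \,\vee\, \overline{\beta(q)}.
\]

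Finally, I would verify clause-by-clause that $\beta'$ satisfies every clause of $Q^3$. This reduces to a short routine check: clauses $1$ and $4$ are satisfied either by a universal literal when $\beta(r)$ is on the ``free'' side, or else by the appropriate value of $a$; clauses $3$ and $6$ are handled analogously via $b$ and $\beta(q)$; and clauses $2$ and $5$ are satisfied because $\beta'(a)=F$ whenever $\beta(u)=T$ and $\beta'(b)=T$ whenever $\beta(w)=F$, respectively. Hence $Q^3$ admits the desired extension for every universal assignment, proving the lemma.
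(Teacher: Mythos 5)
Your proposal is correct and takes essentially the same approach as the paper: exhibit an explicit extension to the existential variables and verify each clause. The paper uses the slightly simpler witness $\beta'(a)=\overline{\beta(u)}\wedge\overline{\beta(r)}$ (with the same $\beta'(b)$), but your $\beta'(a)=\overline{\beta(u)}\wedge(\overline{\beta(v)}\vee\overline{\beta(r)})$ also satisfies all six clauses, so the difference is inessential.
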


\begin{proof}
Let $U = \{u, v, w, q,r\}$, and let $E = \{a, b\}$. Futhermore, let $\beta \colon U \rightarrow \{T, F\}$ be a truth assignment for $U$. We extend $\beta$ to a truth assignment $\beta' \colon U \cup E \rightarrow \{T, F\}$ for $Q^3$ as follows: 
\[
\beta'(a) = \overline{\beta(u)} \wedge \overline{\beta(r)}, \quad \beta'(b) = \overline{\beta(w)} \vee \overline{\beta(q)}. 
\]
It is now straightforward to check that $Q^3$ is a yes-instance of  \sc{$\forall \exists$ 3-SAT}.
\end{proof}

\subsection{Hardness of {\sc Balanced $\forall \exists$ 3-SAT-$(s_1, s_2, t_1, t_2)$}}

In this section, we show that {\sc Balanced $\forall \exists$ 3-SAT-$(s_1, s_2, t_1, t_2)$} is $\Pi_2^P$-complete when
$$(s_1, s_2, t_1, t_2)\in \{(2, 2, 2, 2), (1, 1, 2, 2)\}.$$  
To this end, for a clause $C$, we use $\overline{C}$ to denote the clause obtained from $C$ by replacing each literal with its negation and call $\overline{C}$ the {\it complement} of $C$. For example, if $C=(x_1\vee\overline{x}_2\vee\overline{x}_3)$, then $\overline{C}=(\overline{x}_1\vee x_2\vee x_3)$.

\begin{thm}\label{thm:balanced}
{\sc Balanced $\forall \exists$ 3-SAT-$(2, 2, 2, 2)$} is $\Pi_2^P$-complete.  
\end{thm}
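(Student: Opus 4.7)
Membership in $\Pi_2^P$ is immediate, since \textsc{Balanced $\forall \exists$ 3-SAT-$(2,2,2,2)$} is a syntactic restriction of \textsc{$\forall \exists$ 3-SAT}, which lies in $\Pi_2^P$ by Stockmeyer~\cite{stockmeyer76}.

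For $\Pi_2^P$-hardness I plan a polynomial-time many-one reduction from \textsc{$\forall \exists$ 3-SAT}. Given an instance $\Phi = \forall X_1^p \, \exists X_{p+1}^n \, \bigwedge_{j=1}^m C_j$, I construct an equivalent instance of \textsc{Balanced $\forall \exists$ 3-SAT-$(2,2,2,2)$} in three stages.

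\textbf{Stage 1 (clause normalization).} Pad every clause containing fewer than three distinct literals with fresh existential dummy variables so that every clause becomes a 3-clause on three distinct variables. Dummies are treated as ordinary existentials in Stage 2.

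\textbf{Stage 2 (occurrence-count normalization).} For each variable $v$ with $k_+$ positive and $k_-$ negative occurrences, introduce fresh copies, one per occurrence. For \emph{existential} variables the copies are existential and are tied together by a cyclic chain of implications realized as $\mathcal{S}$-enforcers (Observation~\ref{ob:enforcer1}), exactly mirroring the Berman--Karpinski--Scott transformation from~\cite{berman03}. For \emph{universal} variables the copies must themselves be universal, so I introduce an auxiliary existential inconsistency indicator $d_v$ together with a gadget that (i) forces $d_v=T$ whenever two copies of $v$ receive opposite values, and (ii) disjoins $d_v$ (or $\overline{d_v}$) into the clauses that use the copies, so that an inconsistent adversary assignment makes those clauses trivially satisfiable while a consistent assignment forces $d_v=F$ and leaves the original clauses in effect. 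After equalization, a constant number of $x^{(2)}$-enforcers (Observation~\ref{ob:enforcer2}), $E(x)$-enforcers (Observation~\ref{ob:E(x)}) and their mirrored counterparts are attached to each variable in order to top up its appearance counts to exactly $(2,2)$; by Observations~\ref{ob:enforcer2} and~\ref{ob:E(x)} these top-up enforcers do not alter the quantified truth value, and their internal variables are already at $(2,2)$.

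\textbf{Stage 3 (balancing universals and existentials).} After Stage 2 the total number $e$ of existentials typically exceeds the number $p'$ of universals, since essentially all auxiliary variables are existential. Append $e - p'$ pairwise disjoint copies of the $Q^1$-enforcer (Lemma~\ref{lem:q_forall}); each copy contributes $5$ fresh universals and $4$ fresh existentials, each appearing exactly twice unnegated and twice negated, and is universally satisfiable by Lemma~\ref{lem:q_forall}. Hence we can reach $n = 2p$ exactly without changing the quantified truth value.

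Correctness then reduces to checking that a $\forall$-assignment to the original universal variables lifts to a $\forall$-assignment of the new instance (by duplicating onto the copies and choosing the $Q^1$-universals freely), and that an existential witness lifts through every enforcer by the relevant observation or lemma. The main obstacle will be the universal-copy equalization gadget in Stage 2: in the unquantified Berman--Karpinski--Scott setting the copies are simply forced equal by satisfiable implication clauses, but in the $\forall\exists$ setting the adversary may deliberately break those implications. The gadget must therefore convert any inconsistency into an auto-satisfied configuration via $d_v$, while keeping (a) $d_v$'s own appearance count at $(2,2)$, (b) every added clause a 3-clause on three distinct variables, and (c) the top-up budget of Stage 2 feasible. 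Once that gadget is fixed, the remaining work---counting appearances, invoking Observations~\ref{ob:enforcer1}--\ref{ob:E(x)} and Lemma~\ref{lem:q_forall}, and balancing variable cardinalities---is routine bookkeeping.
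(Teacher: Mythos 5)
Your membership argument and your Stage 3 (padding with disjoint copies of the $Q^1$-enforcer to balance the variable counts) match the paper, but the core of your Stage 2 has two genuine gaps. First, the universal-copy ``inconsistency indicator'' gadget that you yourself flag as the main obstacle is never constructed, and that is exactly where the difficulty lives: $d_v$ would have to be forced to $F$ precisely when all universal copies of $v$ agree and be free otherwise, it would have to be disjoined into every clause using a copy (turning 3-clauses into 4-clauses on possibly non-distinct variables), and it would still have to end up appearing exactly twice unnegated and twice negated --- none of this is routine bookkeeping. The paper sidesteps the problem entirely with Schaefer's trick: each universal $x_i$ is replaced by a fresh \emph{existential} $y_i$, and a single fresh universal $z_i$ is tied to it by the clauses $(\overline{z}_i \vee y_i) \wedge (z_i \vee \overline{y}_i)$. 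After this step every universal variable already appears exactly once unnegated and once negated and occurs nowhere else, so no universal variable ever needs to be duplicated; only existential variables go through the Berman--Karpinski--Scott splitting.

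Second, your plan to ``top up'' appearance counts to $(2,2)$ by attaching $x^{(2)}$- or $E(x)$-enforcers to variables is unsound: by Observations~\ref{ob:enforcer2} and~\ref{ob:E(x)} these enforcers \emph{force} their enforcer variable to $T$, so attaching one to a semantic variable pins its value and changes the quantified truth value of the formula. The paper only ever attaches $E(u)$ to a fresh auxiliary variable $u$ that is \emph{meant} to be forced true (in the replacement $(\ell_1\vee\ell_2\vee\overline{u})\wedge E(u)$ of a 2-clause). The exact $(2,2)$ occurrence counts are instead obtained by the complement-clause trick, replacing each $C_j$ by $C_j\wedge\overline{C}_j$, which is also the reason the paper reduces from {\sc $\forall\exists$ NAE-3-SAT} rather than from {\sc $\forall\exists$ 3-SAT}: a clause is nae-satisfied if and only if both it and its complement are satisfied, and the doubling makes every copy's unnegated and negated appearance counts equal. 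Starting from plain {\sc $\forall\exists$ 3-SAT} as you do, your construction has no mechanism for equalizing positive and negative appearances, so even the existential half of Stage 2 does not reach $(2,2)$.
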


\begin{proof}
Noting that {\sc Balanced $\forall \exists$ 3-SAT-$(2, 2, 2, 2)$} is a special case of $\forall \exists$ 3-SAT, we deduce that the problem is in $\Pi_2^P$. We show that the problem is $\Pi_2^P$-hard by a reduction from {\sc $\forall \exists$ NAE-3-SAT}. Let 
\[
\Phi_1 = \forall X_1^p \exists X_{p+1}^n \varphi
\]
be an instance of $\forall\exists$ {\sc NAE-3-SAT}, where $$\varphi=\bigwedge_{j = 1}^m C_j$$ is a Boolean formula over a set $V_1=\{x_1,x_2,\ldots,x_n\}$ of variables such that $C_j$ is a disjunction of at most three literals for all $j\in \{1, 2, \ldots, m\}$ and no~$C_j$ contains a single literal since, otherwise, $\Phi_1$ is a no-instance.  Following Schaefer~\cite[p.\,298]{schaefer01} and noting that his reduction translates without changes to {\sc $\forall \exists$ NAE-3-SAT}, we first 
modify $\Phi_1$ using the following transformation that turns every universal variable $x_i$ of $\Phi_1$ into an existential variable $y_i$ and introduces the set of new universal variables $\{z_1,z_2,\ldots,z_p\}$: 
\begin{eqnarray*}
\Phi_2 &=& \forall Z_1^p \exists X_{p+1}^n \exists Y_{1}^p  \varphi[x_1 \mapsto y_1, \ldots, x_p \mapsto y_p] \wedge \bigwedge_{i=1}^p \left ((\bar{z}_i \vee y_i) \wedge (z_i \vee \bar{y}_i) \right ).  \\
&=&\forall Z_1^p \exists X_{p+1}^n \exists Y_{1}^p  \varphi',
\end{eqnarray*}
where $\varphi' = \varphi[x_1 \mapsto y_1, \ldots, x_p \mapsto y_p] \wedge \bigwedge_{i=1}^p \left ((\bar{z}_i \vee y_i) \wedge (z_i \vee \bar{y}_i) \right )$. Let $V_2$ be the set of variables of $\Phi_2$. 

\begin{sublemma}
$\Phi_1$ is a yes-instance of  {\sc $\forall \exists$ NAE-3-SAT} if and only if $\Phi_2$ is a yes-instance of  {\sc $\forall \exists$ NAE-3-SAT}. 
\end{sublemma}

\begin{proof}
First, suppose that $\Phi_1$ is a yes-instance of  {\sc $\forall \exists$ NAE-3-SAT}. Let $\beta_1$ be a truth assignment for $V_1$ that nae-satisfies $\Phi_1$, and let $\beta_2$ be the following truth assignment for $V_2$:
\begin{enumerate}[label=(\roman*), noitemsep]
\item set $\beta_2(x_i)=\beta_1(x_i)$ for each $i\in\{p+1,p+2,\ldots,n\}$;
\item set $\beta_2(y_i)=\beta_1(x_i)$ for each $i\in\{1,2,\ldots,p\}$;
\item set $\beta_2(z_i)=\beta_2(y_i)$ for each $i\in\{1,2,\ldots,p\}$.
\end{enumerate}
By (iii), it is straightforward to check that $\beta_2$ nae-satisfies $\Phi_2$ and that, for every truth assignment for $Z_1^p$,  there exists a truth assignment for $X_{p+1}^n\cup Y_1^p$ that nae-satisfies $\Phi_2$. Hence, $\Phi_2$ is a yes-instance of {\sc $\forall \exists$ NAE-3-SAT}.

Second, suppose that $\Phi_2$ is a yes-instance of  {\sc $\forall \exists$ NAE-3-SAT}. Let $\beta_2$ be a truth assignment for $V_2$ that nae-satisfies $\Phi_2$. By construction of $\Phi_2$, it follows that $\beta_2(z_i)=\beta_2(y_i)$ for each $i\in\{1,2,\ldots,p\}$. Hence $\beta_1$ with $\beta_1(x_i)=\beta_2(y_i)$ for each $i\in\{1,2,\ldots,p\}$, and $\beta_1(x_i)=\beta_2(x_i)$ for each $i\in\{p+1,p+2,\ldots,n\}$ is a truth assignment for $V_1$ that nae-satisfies $\Phi_2$. Thus $\Phi_1$ is a yes-instance of  {\sc $\forall \exists$ NAE-3-SAT}
\end{proof}

For each $w_i\in X_{p+1}^n\cup Y_1^p$, we use $a(w_i)$ to denote the number of appearances of $w_i$ in $\varphi'$ throughout the remainder of this proof. Next, we apply, in turn, the following transformation adapted from Berman et al.~\cite[p.\,4]{berman03} yielding an instance of \textsc{$\forall \exists$ 3-SAT}. 

\begin{enumerate}
\item Replace $\exists X_{p+1}^n$ in  $\Phi_2$ with the following list of existential variables: $$\exists x_{p+1,1}\exists x_{p+1,2}\cdots \exists x_{p+1,a(x_{p+1})}\cdots\exists x_{n,1}\exists x_{n,2}\cdots \exists x_{n,a(x_n)}$$ 
Similarly, replace $ \exists Y_{1}^p$ in $\Phi_2$ with the following list of existential variables: $$\exists y_{1,1}\exists y_{1,2}\cdots \exists y_{1,a(y_1)}\cdots \exists y_{p,1}\exists y_{p,2}\cdots \exists y_{p,a(y_p)}.$$
Lastly,  for each existential variable $w_i\in X_{p+1}^n\cup Y_1^p$ and all $k\in \{1, \dots, a(w_i)\}$, replace the $k$-th appearance of $w_i$ in $\varphi'$ by $w_{i,k}$.
\item Replace each clause $C_j$ with $C_j\wedge \overline{C}_j$. 
\item For each $w_i\in X_{p+1}^n\cup Y_1^p$, introduce the clauses 
\[
(\overline{w_{i,1}} \vee w_{i,2}) \wedge (\overline{w_{i,2}} \vee w_{i,3}) \wedge \cdots \wedge (\overline{w_{i,a(w_i)-1}} \vee w_{i,a(w_i)}) \wedge (\overline{w _{i,a(w_i)}}\vee w_{i,1}).
\]   
\item Replace each 2-clause $(\ell_1\vee \ell_2)$  by $(\ell_1\vee \ell_2\vee\overline{u})\wedge E(u)$, where $u$ and all 18 variables introduced by $E(u)$ are new existential variables. Append all 19 new variables to the list of existential variables. 
\end{enumerate}   
Let $\Phi_3$ denote the formula constructed by the preceding four-step procedure, and let $V_3$ be the set of variables of $\Phi_3$. 

\begin{sublemma}
$\Phi_2$ is a yes-instance of  {\sc $\forall \exists$ NAE-3-SAT} if and only if $\Phi_3$ is a yes-instance of  {\sc $\forall \exists$ 3-SAT}. 
\end{sublemma}

\begin{proof}
First, suppose that $\Phi_2$ is a yes-instance of  {\sc $\forall \exists$ NAE-3-SAT}. Let $\beta_2$ be a truth assignment for $V_2$ that nae-satisfies $\Phi_2$. Obtain a truth assignment~$\beta_3$ for $V_3$ as follows:
\begin{enumerate}[label=(\roman*), noitemsep]
\item set $\beta_3(z_i)=\beta_2(z_i)$ for each $i\in\{1,2,\ldots,p\}$;
\item set $\beta_3(x_{i,k})=\beta_2(x_i)$ for each $i\in\{p+1,p+2,\ldots,n\}$ and $k\in\{1,2,\ldots,a(x_i)\}$;
\item set $\beta_3(y_{i,k})=\beta_2(y_i)$ for each $i\in\{1,2,\ldots,p\}$ and $k\in\{1,2,\ldots,a(y_i)\}$.
\end{enumerate}
Additionally, for each 2-clause $C=(\ell_1\vee \ell_2)$  that is replaced with  the $\mathcal{S}$-enforcer $(\ell_1\vee \ell_2\vee\overline{u})\wedge E(u)$ in Step 4, set $\beta_3(u)=T$, and set all 18 existential variables introduced by $E(u)$ such that the 25 clauses of $E(u)$ are satisfied. By construction of $E(u)$ and Observation~\ref{ob:E(x)}, this is always possible. If $C$ is a 2-clause of $\Phi_2$, then, as $C$ is nae-satisfied by $\beta_2$, it follows that $\beta_3$ satisfies $(\ell_1\vee \ell_2\vee\overline{u})\wedge E(u)$. If $C$ is initially a 2-clause introduced in Step 3 and then replaced in Step 4, it follows by (ii) and (iii) that $\beta_3$ satisfies $(\ell_1\vee \ell_2\vee\overline{u})\wedge E(u)$.  Noting that if a truth assignment nae-satisfies a clause, then it also nae-satisfies its complement, it is now straightforward to check that $\beta_3$ satisfies $\Phi_3$ and, hence, $\Phi_3$ is a yes-instance of  {\sc $\forall \exists$ 3-SAT}.

Second, suppose that $\Phi_3$ is a yes-instance of  {\sc $\forall \exists$ 3-SAT}. Let $\beta_3$ be a truth assignment that satisfies $\Phi_3$. Let $u$ be an enforcer variable such that the 25 clauses associated with $E(u)$ are clauses of $\Phi_3$ but not of $\Phi_2$. By construction of $E(u)$ and Observation~\ref{ob:E(x)}, we have $\beta_3(u)=T$.  Now let $w_i\in X_{p+1}^n\cup Y_1^p$. As $\beta_3$ satisfies $\Phi_3$ and each enforcer variable that is contained in~$V_3$ is assigned to $T$ under $\beta_3$, it follows from the clauses introduced in Step~3 that $$\beta_3(w_{i,1})=\beta_3(w_{i,2})=\cdots=\beta_3(w_{i,{a(w_i)}}).$$ Let $\beta_2$ be the truth assignment for $\Phi_2$ that is obtained from $\beta_3$ as follows:
\begin{enumerate}[label=(\roman*), noitemsep]
\item $\beta_2(z_i)=\beta_3(z_i)$ for each $i\in\{1,2,\ldots,p\}$,
\item $\beta_2(x_i)=\beta_3(x_{i,1})$ for each $i\in\{p+1,p+2,\ldots,n\}$, and
\item $\beta_2(y_i)=\beta_3(y_{i,1})$ for each $i\in\{1,2,\ldots,p\}$.
\end{enumerate}
As $\beta_3$ satisfies $\Phi_3$, it immediately follows that $\beta_2$ satisfies $\Phi_2$. We complete the proof by showing that $\beta_2$ nae-satisfies $\Phi_2$. Assume that there exists a clause $C$ in $\Phi_2$ whose literals all evaluate to $T$ under $\beta_2$. Let $D$ be the clause obtained from $C$ by applying Step 1. If~$C$ contains exactly three literals, then all three literals of~$\overline{D}$ evaluate to~$F$; thereby contradicting that~$\beta_3$ satisfies $\Phi_3$. On the other hand, if~$C$ contains exactly two literals, then~$D$ is replaced with a 3-clause, say $D'$, and an enforcer, say $E(u')$, in Step 4 and, similarly, $\overline{D}$ is replaced with a 3-clause, say $D''$, and an enforcer, say~$E(u'')$, in Step~4. Note that $D''$ is not the complement of $D'$. Furthermore, again by Observation~\ref{ob:E(x)}, we have $\beta_3(u')=\beta_3(u'')=T$. Now, as each literal of $C$ evaluates to $T$,  each literal of $D''$ evaluates to $F$ under $\beta_3$; a contradiction. Hence $\beta_2$ nae-satisfies $\Phi_2$, and so $\Phi_2$ is a yes-instance of {\sc $\forall \exists$ NAE-3-SAT}.
\end{proof}

We next obtain a quantified Boolean formula $\Phi_4$ from $\Phi_3$ such that the number of universal variables in $\Phi_4$ is equal to the number of existential variables in $\Phi_4$. Let $p_e$  be the number of existential variables in $V_3$, and let~$p_u$ be the number of universal variables in $V_3$. By construction, observe that $p_u=p\geq 0$. Since a new existential variable $y_i$ has been introduced for each universal variable $x_i$ in $V_1$ with $i\in\{1,2,\ldots,p\}$, we  have $p_e\geq p_u$. Let $Q_k^1$ be the enforcer with variables $\{a_k,b_k,c_k,d_k,q_k,r_k,u_k,v_k,w_k\}$ as introduced in Section~\ref{sec:enforcer}. Obtain $\Phi_4$ from $\Phi_3$ by adding $Q_k^1$ to the boolean formula, appending $\exists a_k\exists b_k\exists c_k\exists d_k$ to the list of existential variables, and appending $\forall q_k\forall r_k\forall u_k\forall v_k\forall w_k$ to the list of universal variables for each $k\in\{1,2,\ldots, p_e-p_u\}$. It now follows that $\Phi_4$ contains $p_e+4(p_e-p_u)=5p_e-4p_u$ existential variables and $p_u+5(p_e-p_u)=5p_e-4p_u$ universal variables. Moreover, by Lemma~\ref{lem:q_forall}, we have that $\Phi_3$ is a yes-instance of {\sc $\forall \exists$ 3-SAT} if and only if $\Phi_4$ is a yes-instance of {\sc $\forall \exists$ 3-SAT}. 

We complete the proof by showing that $\Phi_4$ is an instance of {\sc Balanced $\forall \exists$ 3-SAT-$(2, 2, 2, 2)$}. Let $V_4$ be the set of variables of $\Phi_4$. By the transformation of $\Phi_1$ into $\Phi_3$ and the construction of $Q_k^1$, it is easily checked that each universal variable in $V_4$ appears exactly twice unnegated and exactly twice negated in $\Phi_4$. Now, consider the following three sets of existential variables: 
\begin{enumerate}[label=(\Roman*), noitemsep]
\item $S_1= \bigcup_{k=1}^{p_e-p_u}\{a_k,b_k,c_k,d_k\},$
\item $S_2=\bigcup_{i=p+1}^n\{x_{i,1},x_{i,2},\ldots,x_{i,a(x_i)}\}$ and
\item $S_3=\bigcup_{i=1}^p\{y_{i,1},y_{i,2},\ldots,y_{i,a(y_i)}\}$.
\end{enumerate}
It follows again from the construction of $Q_k^1$ that each variable in $S_1$ appears exactly twice unnegated and exactly twice negated in $\Phi_4$. Furthermore, by Steps 1--3 in the construction of $\Phi_3$, it follows that each variable in $S_2\cup S_3$ appears exactly twice unnegated and exactly twice negated in $\Phi_4$. Lastly, each existential variable in $V_4-(S_1 \cup S_2\cup S_3)$ has been introduced by replacing a 2-clause $(\ell_1\vee\ell_2)$ with $(\ell_1\vee\ell_2\vee\overline{u})\wedge E(u)$ in Step~4 of the construction of $\Phi_3$. Recall that  $u$ appears unnegated exactly twice and negated exactly once in $E(u)$, and that each of the 18 remaining variables introduced by $E(u)$ appears exactly twice unnegated and exactly twice negated in $E(u)$. It now follows that $\Phi_4$ is an instance of {\sc Balanced $\forall \exists$ 3-SAT-$(2, 2, 2, 2)$}. We complete the proof of this theorem by noting that each clause of $\Phi_4$ is a 3-clause that contains three distinct variables and that the size of $\Phi_4$ is polynomial in the size of $\Phi_1$.
\end{proof}

In Theorem~\ref{thm:balanced}, we have imposed the same bound on existential and universal variables, i.e. $s_1=s_2=t_1=t_2$. By allowing separate bounds, i.e. $s_1=s_2$ and $t_1=t_2$, we obtain the following stronger result. 

\begin{thm}\label{thm:balanced_sep}
{\sc Balanced $\forall \exists$ 3-SAT-$(1, 1, 2, 2)$} is $\Pi_2^P$-complete.
\end{thm}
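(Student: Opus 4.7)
The plan is to follow the blueprint of the proof of Theorem~\ref{thm:balanced} closely, again reducing from {\sc $\forall \exists$ NAE-3-SAT}, while making targeted modifications so that every universal variable ends up with the $(1,1)$ profile required by {\sc Balanced $\forall \exists$ 3-SAT-$(1,1,2,2)$}. Membership in $\Pi_2^P$ is inherited from $\forall \exists$~3-SAT. Starting from $\Phi_1$, I would first build $\Phi_2 = \forall Z_1^p \exists X_{p+1}^n \exists Y_1^p\, \varphi'$ via the same Schaefer-style transformation as before, so that the only universal variables are the $z_i$'s and each of them already appears exactly once unnegated and exactly once negated, namely in the equality 2-clauses $(\overline{z_i} \vee y_i) \wedge (z_i \vee \overline{y_i})$.

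The central change is in the construction of $\Phi_3$ from $\Phi_2$. In Theorem~\ref{thm:balanced}, the doubling of Step~2 is applied uniformly to all clauses, which lifts every $z_i$ from $(1,1)$ up to $(2,2)$. I would instead leave the equality 2-clauses untouched by Step~2; the complement of $(\overline{z_i}\vee y_i)$ is $(z_i \vee \overline{y_i})$, which is already the other equality clause, so doubling here only produces semantically redundant copies. To keep each existential at $(2,2)$ at the same time, I would modify Steps~1 and~3 as follows: the two $y_i$-occurrences inside the equality 2-clauses are kept as the single variable $y_i$ rather than being split into distinct copies, and the cycle of 2-clauses introduced in Step~3 is extended to pass through $y_i$ itself, namely
\[
(\overline{y_i}\vee y_{i,1}) \wedge (\overline{y_{i,1}}\vee y_{i,2}) \wedge \cdots \wedge (\overline{y_{i,b(y_i)}}\vee y_i),
\]
where $b(y_i)$ denotes the number of occurrences of $y_i$ outside the equality clauses. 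A direct appearance count then shows that $y_i$ collects $(1,1)$ from the equality clauses plus $(1,1)$ from the cycle, each copy $y_{i,k}$ collects $(1,1)$ from the Step-2 doubling of its 3-clause plus $(1,1)$ from the cycle, the existential copies $x_{i,k}$ behave exactly as in Theorem~\ref{thm:balanced}, and each $z_i$ keeps its $(1,1)$ profile. Step~4 proceeds unchanged, replacing every remaining 2-clause by a 3-clause together with an $E(u)$-enforcer on 19 brand-new existential variables, which by Observation~\ref{ob:E(x)} contributes only variables of the correct type.

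To produce a balanced instance~$\Phi_4$, I would pad with copies of the quantified enforcer $Q^3$ of Lemma~\ref{lem:q_forall-3} in place of~$Q^1$. Each such copy supplies five universal variables, each with $(1,1)$ appearances, and two existential variables, each with $(2,2)$ appearances, thereby decreasing $p_e - p_u$ by exactly three while preserving the $(1,1,2,2)$ pattern. Divisibility by three comes for free: every clause of the pre-padded formula is a 3-clause over variables that each contribute either two appearances (a $(1,1)$ universal) or four appearances (a $(2,2)$ existential), so counting literal-slots modulo three yields $2p_u + 4p_e \equiv 0 \pmod 3$, equivalently $p_e \equiv p_u \pmod 3$. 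Since also $p_e \ge p_u$ (every $z_i$ is matched by the new existential $y_i$), exactly $(p_e - p_u)/3$ copies of $Q^3$ balance the two quantifier blocks and, by Lemma~\ref{lem:q_forall-3}, preserve yes/no-instance status.

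The main obstacle will be the careful bookkeeping around the equality 2-clauses: I must argue that skipping the doubling of exactly those clauses does not damage the NAE-to-SAT correspondence. This should ultimately follow from the observation that $(\overline{z_i}\vee y_i)\wedge(z_i\vee\overline{y_i})$ is semantically the equation $z_i \leftrightarrow y_i$, which is self-dual under complementation, so the constraint these clauses impose is the same whether they are doubled or not. Once that point is settled, the NAE-satisfiability $\Leftrightarrow$ satisfiability argument of Theorem~\ref{thm:balanced} carries over verbatim for the remaining, doubled, 3-clauses, and the overall equivalence between $\Phi_1$ and $\Phi_4$ follows by the same two-sublemma structure as before.
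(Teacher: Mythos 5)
Your proposal is essentially correct, but it takes a genuinely different route from the paper. The paper does not reduce from {\sc $\forall \exists$ NAE-3-SAT} a second time: it reduces from {\sc Balanced $\forall \exists$ 3-SAT-$(2,2,2,2)$} itself (Theorem~\ref{thm:balanced}), converting each old universal $x_i$ into an existential $y_i$ and tying it to a fresh $(1,1)$-universal $c_i$ via the quantified enforcer pair $\mathcal{S}_u(\overline{c_i}, y_i, y_i) \wedge \mathcal{S}_u(c_i, \overline{y_i}, \overline{y_i})$ (which encodes $c_i \leftrightarrow y_i$ while giving $c_i$ exactly one negated and one unnegated appearance); the resulting $y_i$, now appearing four times unnegated and four times negated, is then split into four copies linked by implication clauses guarded by $d^{(2)}$-enforcers, and the instance is balanced with copies of $Q^3$ exactly as you propose. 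Your alternative instead re-engineers the Theorem~\ref{thm:balanced} reduction at the source, exploiting the observation that the pair $(\overline{z_i}\vee y_i)\wedge(z_i\vee\overline{y_i})$ is closed under complementation, so exempting it from the Step-2 doubling keeps each $z_i$ at profile $(1,1)$ without changing the constraint it imposes; your extended cycle through $y_i$ itself then restores the $(2,2)$ count for the existentials. Both arguments are sound. The paper's two-stage design buys modularity -- Theorem~\ref{thm:balanced} is reused as a black box and only the new gadgets need verification -- whereas your single-stage construction obliges you to re-verify the entire NAE-to-SAT correspondence with the modified Steps 1--3, which you correctly identify as the main burden and resolve via the self-duality point. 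Your divisibility argument ($2p_u + 4p_e \equiv 0 \pmod 3$ from counting literal slots) is cleaner than the paper's explicit tally $p_e = 27p$, $p_u = p$. One small bookkeeping gap to close: if some universal variable of $\Phi_1$ has $b(y_i)=0$ (it occurs in no clause of $\varphi$), your cycle degenerates and $y_i$ retains profile $(1,1)$, violating the existential $(2,2)$ requirement; such variables must be deleted in preprocessing (the paper's route avoids this because its source instances have every variable appearing exactly four times).
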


\begin{proof}
Clearly, {\sc Balanced $\forall \exists$ 3-SAT-$(1, 1, 2, 2)$} is in $\Pi_2^P$. We establish $\Pi_2^P$-hardness via a reduction from {\sc Balanced $\forall \exists$ 3-SAT-$(2, 2, 2, 2)$}. Let
\[
\Phi_1 = \forall X_1^p \exists X_{p+1}^{2p} \varphi. 
\]
be an instance of {\sc Balanced $\forall \exists$ 3-SAT-$(2, 2, 2, 2)$}. Let $m$ be the number of 3-clauses of $\varphi$. As $3m=2p\cdot 4$, observe that $p$ is divisible by 3. Following a similar strategy as in the proof of Theorem~\ref{thm:balanced}, we apply the following 4-step process to transform $\Phi_1$ into an instance~$\Phi_4$ of {\sc Balanced $\forall \exists$ 3-SAT-$(1, 1, 2, 2)$}.
\begin{enumerate}
\item Obtain
\begin{align*}
\Phi_2 = \forall C_1^p \exists X_{p+1}^{2p} \exists Y_{1}^p \exists Z_{1}^{6p}  &\varphi[x_1 \mapsto y_1, \ldots, x_p \mapsto y_p] \wedge {} \\
&\bigwedge_{i=1}^p \left (\mathcal{S}_u(\overline{c_i}, y_i, y_i) \wedge \mathcal{S}_u(c_i, \overline{y_i}, \overline{y_i}) \right ), 
\end{align*}
by turning each universal variable in $x_i\in X_1^p$ into an existential variable $y_i$, adding new universal variables $c_1,c_2,\ldots,c_p$, and adding new existential variables $z_1,z_2,\ldots,z_{6p}$ that are introduced as new variables by copies of the $\mathcal{S}$-enforcer. By construction, each $y_i\in Y_1^p$ appears exactly four times unnegated and exactly four times negated in $\Phi_2$.
\item For each $y_i\in Y_1^p$ and $k\in\{1,2,3,4\}$, replace the $k$-th negated appearance of $y_i$ with $\overline{y_{i,k}}$ and replace the $k$-th unnegated appearance of $y_i$ with $y_{i,k}$. Then replace $\exists Y_1^p$ in $\Phi_2$ with the following list of existential variables
$$\exists y_{1,1}\exists y_{1,2}\exists y_{1,3}\exists y_{1,4}\cdots \exists y_{p,1}\exists y_{p,2}\exists y_{p,3}\exists y_{p,4}.$$
\item Add the following clauses to the Boolean formula resulting from Step~2:
\begin{align*}
\bigwedge_{i = 1}^p \biggl [  &(\overline{y_{i,1}} \vee y_{i, 2} \vee \overline{d_{i, 1}}) \wedge (\overline{y_{i, 2}} \vee y_{i, 3} \vee \overline{d_{i, 1}}) \wedge d_{i, 1}^{(2)}  \wedge {} \\ 
&(\overline{y_{i,3}} \vee y_{i, 4} \vee \overline{d_{i, 2}}) \wedge (\overline{y_{i,4}} \vee y_{i, 1} \vee \overline{d_{i, 2}}) \wedge d_{i, 2}^{(2)} \biggr],
\end{align*}
where $d_{i, 1}$ and $d_{i, 2}$  are new existential variables with $i\in\{1,2\ldots,p\}$, and $d_{i, 1}^{(2)}$ and $d_{i, 2}^{(2)}$ are the corresponding enforcers as introduced in Section~\ref{sec:prelim}. Then append $$\exists d_{1,1}\exists d_{1,2}\exists d_{2,1}\exists d_{2,2}\cdots\exists d_{p,1}\exists d_{p,2}\exists E_1^{14p}$$ to the list of existential variables, where $E_1^{14p}$ is the set of new variables introduced by these enforcers (each of $d_{i, 1}^{(2)}$ and $d_{i, 2}^{(2)}$ introduces seven such variables). Let $\Phi_3$ denote the resulting quantified Boolean formula.
\item Note that each universal variable of $\Phi_3$ appears exactly once unnegated and exactly once negated, and that each existential variable of $\Phi_3$ appears exactly twice unnegated and exactly twice negated. Let $p_e$ (resp.\ $p_u$) be the number of existential (resp.\ universal) variables in~$\Phi_3$. Then $$p_e = p + 4p+ 6p +  2p + 14p = 27p \quad \textnormal{ and }\quad p_u = p.$$ Evidently, $p_e\geq p_u$. Furthermore, as $p$ is divisible by $3$, it follows that $p_e$ and $p_u$ are both divisible by 3. Let $\Delta=(p_e-p_u)/3$. Now, for each $k\in\{1,2,\ldots, \Delta\}$, add the enforcer $Q_k^3$ as introduced in Section~\ref{sec:enforcer} to $\Phi_3$, append $\exists a_k\exists b_k$ to the list of existential variables, and append $\forall q_k\forall r_k\forall u_k\forall v_k\forall w_k$ to the list of universal variables.
\end{enumerate}
Let $\Phi_4$ denote the formula resulting from the preceding 4-step process. By construction, each clause in $\Phi_4$ is a 3-clause that contains three distinct variables. Moreover, since, for each $k$, the enforcer $Q_k^3$ increases the number of universal variables by five and the number of existential variables by two, it follows that $\Phi_4$ is an instance of {\sc Balanced $\forall \exists$ 3-SAT-$(1, 1, 2, 2)$}. 

Noting that the size of $\Phi_4$ is polynomial in the size of $\Phi_1$, we complete the proof by establishing the following statement.

\begin{sublemma}\label{sublem:iff}
$\Phi_1$ is a yes-instance of {\sc Balanced $\forall \exists$ 3-SAT-$(2,2, 2, 2)$}  if and only if $\Phi_4$ is a yes-instance of {\sc Balanced $\forall \exists$ 3-SAT-$(1, 1, 2, 2)$}.
\end{sublemma}

\begin{proof}
Let  $V_1$ be the set of variables of $\Phi_1$, and let $V_4$ be the set of variables of $\Phi_4$. First, suppose that $\Phi_1$ is a yes-instance of {\sc Balanced $\forall \exists$ 3-SAT-$(2,2, 2, 2)$}. Let $\beta_1$ be a truth assignment that satisfies $\Phi_1$. We obtain a truth assignment $\beta_4$ for a subset of $V_4$, say $V_4'$, from $\beta_1$ as follows:
\begin{enumerate}[label=(\roman*), noitemsep]
\item for each $i\in\{1,2,\ldots,p\}$, set $\beta_4(c_i)=\beta_1(x_i)$;
\item for each $i\in\{p+1,p+2,\ldots,2p\}$, set $\beta_4(x_i)=\beta_1(x_i)$;
\item for each $i\in\{1,2,\ldots,p\}$ and $k\in\{1,2,3,4\}$, set $\beta_4(y_{i,k})=\beta_4(c_i)$;
\item for each $i\in\{1,2,\ldots,p\}$ and $k\in\{1,2\}$, set $\beta_4(d_{i,k})=T$.
\end{enumerate}
It is straightforward to check that each clause in $\Phi_4$ that does not contain a variable in  $$(A_1^\Delta\cup B_1^\Delta\cup E_1^{14p}\cup Q_1^\Delta\cup R_1^\Delta\cup U_1^\Delta\cup V_1^\Delta\cup W_1^\Delta \cup Z_1^{6p})$$ is satisfied by $\beta_4$. We next extend $\beta_4$ in three steps. First, by (iv) and Observation~\ref{ob:enforcer2}, it follows that $\beta_4$  extends to $V_4'\cup E_1^{14p}$ such that, for each $i\in\{1,2,\ldots,p\}$, the clauses of $d_{i,1}^{(2)}$ and $d_{i,2}^{(2)}$ are satisfied. Second, by Lemma~\ref{lem:q_forall-3}, $\beta_4$ also extends to $$V_4'\cup A_1^\Delta\cup B_1^\Delta \cup Q_1^\Delta\cup R_1^\Delta\cup U_1^\Delta\cup V_1^\Delta\cup W_1^\Delta$$ such that each clause in $Q_1^3\wedge Q_2^3\wedge\cdots\wedge Q_\Delta^3$ is satisfied. Third, by (i), (iii),  and Observation~\ref{ob:enforcer1} together with its subsequent remark, it follows that $\beta_4$ extends to $V_4'\cup Z_1^{6p}$ such that the clauses in $$\bigwedge_{i=1}^p \left (\mathcal{S}_u(\overline{c_i}, y_i, y_i) \wedge \mathcal{S}_u(c_i, \overline{y_i}, \overline{y_i}) \right )$$ are satisfied. We deduce that $\Phi_4$ is satisfiable.

Second, suppose that $\Phi_4$ is a yes-instance of {\sc Balanced $\forall \exists$ 3-SAT-$(1,1, 2, 2)$}. Let $\beta_4$ be a truth assignment that satisfies $\Phi_4$. It follows from Observation~\ref{ob:enforcer2}, that $\beta_4(d_{i,1})=\beta_4(d_{i,2})=T$ for each $i\in\{1,2,\ldots,p\}$. Hence, the clauses introduced in Step 3 imply that $$\beta_4(y_{i,1})=\beta_4(y_{i,2})=\beta_4(y_{i,3})=\beta_4(y_{i,4}).$$ It is now easy to check that the truth assignment $\beta_1$ for $V_1$ obtained from $\beta_4$ by setting
\begin{enumerate}[label=(\roman*), noitemsep]
\item $\beta_1(x_i)=\beta_4(y_{i,1})$ for each $i\in\{1,2,\ldots,p\}$ and
\item $\beta_1(x_i)=\beta_4(x_i)$ for each $i\in\{p+1,p+2,\ldots,2p\}$
\end{enumerate}
satisfies $\Phi_1$. Thus, statement~\ref{sublem:iff} holds. 
\end{proof}
This completes the proof of Theorem~\ref{thm:balanced_sep}.
\end{proof}

We end this section by remarking that Haviv et al.~\cite[p.\,55]{haviv07} showed that {\sc $\forall \exists$ 3-SAT-$(s_1, s_2, t_1, t_2)$} is in NP if $s_1+s_2\leq 1$ and in co-NP if $t_1+t_2\leq 2$. The latter result implies that  {\sc Balanced $\forall \exists$ 3-SAT-$(1,1,1,1)$} is in co-NP. Hence, unless the polynomial hierarchy collapses, the  balanced bounds on the number of appearances of universal and existential variables established in Theorems~\ref{thm:balanced} and~\ref{thm:balanced_sep} are the best possible ones (i.e., for smaller values, the problems can be placed on a lower level of the polynomial hierarchy). 

\subsection{Hardness of {\sc $\forall \exists$ 3-SAT-$(s_1, s_2, t_1, t_2)$}}

Following on from the results by Haviv et al.~\cite[p.\,55]{haviv07} mentioned in the last paragraph, {\sc $\forall \exists$ 3-SAT-$(s_1, s_2, t_1, t_2)$} with $s_1+s_2\leq 1$ or $t_1+t_2\leq 2$ is not $\Pi_2^P$-hard unless the polynomial hierarchy collapses. In this section, we show which instances of {\sc $\forall \exists$ 3-SAT-$(s_1, s_2, t_1, t_2)$} are NP-complete and which are $\Pi_2^P$-complete. Specifically, we show that {\sc $\forall \exists$ 3-SAT-$(s_1,s_2,t_1,t_2)$} is NP-complete for when $s_1+s_2=1$ and $(t_1,t_2)\in \{(1,2),(2,1)\}$, and $\Pi_2^P$-complete for when $s_1=s_2=1$ and $(t_1,t_2)\in \{(1,2),(2,1)\}$.

Let $\Phi$ be an instance of {\sc $\forall \exists$ 3-SAT-$(s_1,s_2,t_1,t_2)$} with $s_1+s_2=1$, and let $Y_1^p$ be the set of universal variables of $\Phi$. As noted by Haviv et al.~\cite[p.\,55]{haviv07}, we can obtain an equivalent unquantified Boolean formula from $\Phi$ by deleting all literals in $\{y_i,\overline{y}_i:i\in\{1,2,\ldots,p\}\}$ in the clauses of $\Phi$. Hence,  if $\Phi$ has the additional property that $t_1+t_2\leq 2$, it follows from results by Tovey~\cite[Section 3]{tovey84} that it can be decided in polynomial time whether or not $\Phi$ is a yes-instance. Hence, {\sc $\forall \exists$ 3-SAT-$(s_1,s_2,t_1,t_2)$} with $s_1+s_2=1$ and $t_1+t_2\leq 2$ is  polynomial-time solvable. The next theorem shows that {\sc $\forall \exists$ 3-SAT-$(s_1,s_2,t_1,t_2)$} with $s_1+s_2=1$ becomes NP-complete if $(t_1,t_2)\in \{(1,2),(2,1)\}$. To establish this result, we use  a variant of 3-SAT in which each clause is either a 2-clause or a 3-clause, and each variable appears exactly twice unnegated and exactly once negated, or exactly once unnegated and exactly twice negated. We refer to this variant as {\sc 3-SAT-$(3)$}. It was shown by Dahlhaus et al.~\cite[p.\,877f]{dahlhaus94} that {\sc 3-SAT-$(3)$} is NP-complete. To establish the next theorem, we impose the following two restrictions on an instance $\varphi$ of {\sc 3-SAT-$(3)$}.
\begin{enumerate}[label=(R\arabic*), noitemsep]
\item Each 2-clause (resp. 3-clause) contains 2 (resp. 3) distinct variables.
\item Amongst the clauses, each variable appears exactly twice unnegated and exactly once negated.
\end{enumerate}
Indeed, it follows immediately from Dahlhaus et al's.~\cite[p.\,877f]{dahlhaus94} construction that $\varphi$ satisfies (R1). Moreover,  standard pre-processing that replaces each literal of a variable that appears exactly once unnegated and exactly twice negated with its negation can be used to obtain an instance $\varphi'$ from $\varphi$ that satisfies (R2) and that is equivalent to $\varphi$. We hence obtain the following theorem.

\begin{thm}\label{thm:AE3Sat_NPc}
{\sc $\forall \exists$ 3-SAT-$(s_1,s_2,t_1,t_2)$} is {\rm NP}-complete if $s_1+s_2=1$ and $(t_1,t_2)\in \{(1,2),(2,1)\}$.   
\end{thm}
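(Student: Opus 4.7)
The plan is to handle NP-membership and NP-hardness separately, with the hardness reduction driven by the cited NP-completeness of \textsc{3-SAT-$(3)$} restricted to instances that satisfy (R1) and (R2).

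For NP-membership, I would exploit the fact that $s_1+s_2=1$ forces every universal variable $y$ to appear in exactly one clause with a fixed polarity. The adversarial universal assignment therefore sets each $y$ so that its unique literal evaluates to $F$, acting independently on every universal variable. Hence the quantified instance is a yes-instance if and only if the unquantified Boolean formula obtained by deleting every universal literal from every clause is satisfiable (a clause that becomes empty after deletion certifies a no-instance). A satisfying existential assignment is a polynomial-size certificate, placing the problem in NP.

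For NP-hardness, I would first describe the construction for the representative case $(s_1,s_2,t_1,t_2)=(1,0,2,1)$. Given a \textsc{3-SAT-$(3)$} instance $\varphi$ over variables $x_1,\ldots,x_n$ satisfying (R1) and (R2), declare each $x_i$ to be existential. For each $2$-clause $C=(\ell_1\vee\ell_2)$ of $\varphi$, introduce a fresh universal variable $u_C$ and replace $C$ by the $3$-clause $(\ell_1\vee\ell_2\vee u_C)$; leave every $3$-clause unchanged. The resulting quantified formula $\Phi$ is a valid instance of \textsc{$\forall\exists$ 3-SAT-$(1,0,2,1)$}: by (R2) each existential variable still appears twice unnegated and once negated; each $u_C$ appears exactly once unnegated; and by (R1) every clause contains three distinct variables. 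Correctness is immediate from the membership argument, since the adversarial universal assignment sets every $u_C=F$, collapsing each augmented clause back to $(\ell_1\vee\ell_2)$, so $\Phi$ is a yes-instance exactly when $\varphi$ is satisfiable.

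The other three admissible tuples follow by two independent polarity flips. Swapping $s_1$ and $s_2$ amounts to using $\overline{u_C}$ in place of $u_C$ in the augmented clauses, so that the adversarial assignment $u_C=T$ again falsifies the added literal. Swapping $t_1$ and $t_2$ is handled by first producing an equi-satisfiable rewrite of $\varphi$: for each $x_i$ introduce a fresh variable $y_i$ and rewrite every unnegated occurrence of $x_i$ as $\overline{y_i}$ and every negated occurrence of $x_i$ as $y_i$, turning the existential appearance counts $(2,1)$ into $(1,2)$ without affecting satisfiability. Composing the two flips in the appropriate way covers all four admissible tuples, and the whole reduction is clearly polynomial-time. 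Since no nontrivial gadget is required, I expect the principal obstacle to be merely bookkeeping around the polarity flips together with the verification that (R1) and (R2) may indeed be assumed for the source instance, which is already justified in the paragraph preceding the theorem.
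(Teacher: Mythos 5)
Your proposal is correct and follows essentially the same route as the paper: reduce from \textsc{3-SAT-$(3)$} under (R1) and (R2) by appending one fresh universal variable to each 2-clause, argue correctness via the adversarial universal assignment that falsifies every universal literal, and obtain the remaining three tuples by polarity flips on the universal and/or existential literals. The only cosmetic difference is that you spell out the NP-membership argument directly, whereas the paper cites Haviv et al.\ for it (while sketching the same literal-deletion observation just before the theorem).
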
 
\begin{proof}
It was shown by Haviv et al.~\cite[p.\,55]{haviv07} that {\sc $\forall \exists$ 3-SAT-$(s_1,s_2,t_1,t_2)$} with $s_1+s_2=1$ is in NP. We first establish NP-completeness for  {\sc $\forall \exists$ 3-SAT-$(1,0,2,1)$} via a reduction from  {\sc 3-SAT-$(3)$}.

Let
\[
\varphi = \bigwedge_{j=1}^p C_j^2 \wedge \bigwedge_{j=p+1}^m C_j^3
\]
be an instance of  {\sc 3-SAT-$(3)$} over a set $X^n_1$ of variables and such that each clause $C_j^k$ is a $k$-clause with $k\in\{2,3\}$. As described prior to the statement of Theorem~\ref{thm:AE3Sat_NPc}, we may assume that $\varphi$ satisfies (R1) and (R2). Construct the following quantified Boolean formula~$\Phi$ from $\varphi$:
\[
\Phi = \forall Y_1^p \exists X_1^n \left (\bigwedge_{j = 1}^p (C_j^2 \vee y_i) \wedge \bigwedge_{j=p+1}^m C_j^3 \right ). 
\]
Since $\varphi$ satisfies (R2), $\Phi$ is an instance of {\sc $\forall \exists$ 3-SAT-$(1,0,2,1)$}.
First, suppose that $\varphi$ is satisfiable. Then there is a truth assignment $\beta$ that satisfies each clause in~$\varphi$. In particular, $\beta$  satisfies each clause $C_j^2$ and, hence, any extension of $\beta$ to $Y_1^p$with $i \in \{1, 2, \ldots, p\}$ is a truth assignment that satisfies~$\Phi$. Second, suppose that $\Phi$ is satisfiable. Let $\beta'$ be a truth assignment for~$\Phi$ such that $\beta'(y_i)=F$ for each $i\in\{1,2,\ldots,p\}$. By the existence of $\beta'$, it follows that $\beta(x_i)=\beta'(x_i)$ for each $i\in\{1,2,\ldots,n\}$ is a truth assignment that satisfies each clause in $\varphi$. 
As the size of $\Phi$ is polynomial in the size of $\varphi$, NP-completeness of {\sc $\forall \exists$ 3-SAT-$(1,0,2,1)$} now follows. To see that {\sc $\forall \exists$ 3-SAT-$(s_1,s_2,t_1,t_2)$} is also NP-complete for when  
\begin{enumerate}[label=(\Roman*), noitemsep]
\item $(s_1,s_2,t_1,t_2)=(0,1,2,1)$,
\item $(s_1,s_2,t_1,t_2)=(1,0,1,2)$, or
\item $(s_1,s_2,t_1,t_2)=(0,1,1,2)$,
\end{enumerate} 
observe that an argument analogous to the one above applies if each literal of a universal variable in $\Phi$ is replaced with its negation to establish (I), if each literal of an existential variable in $\Phi$ is replaced with its negation to establish (II), and if each literal in $\Phi$ is replaced with its negation to establish (III).
\end{proof}

\begin{thm}\label{thm:AE3Sat_P2P}
{\sc $\forall \exists$ 3-SAT-$(1,1,t_1,t_2)$} with $(t_1,t_2)\in\{(1,2),(2,1)\}$  is $\Pi_2^P$-complete.
\end{thm}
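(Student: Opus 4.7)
The plan is to reduce from \textsc{Balanced $\forall\exists$ 3-SAT-$(1,1,2,2)$} (Theorem~\ref{thm:balanced_sep}). Simultaneously negating every literal of every existential in an instance of \textsc{$\forall\exists$ 3-SAT-$(1,1,2,1)$} swaps the two target patterns, so it suffices to prove the theorem for $(t_1,t_2)=(2,1)$. Let $\Phi_1=\forall X_1^p\,\exists X_{p+1}^{2p}\,\varphi$ be a source instance; each existential appears with pattern $(2,2)$ and I plan to convert each one into four existentials of pattern $(2,1)$ while preserving yes-instance-ness.

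For each existential $x$, my transformation has three steps. First, a \emph{split with implication cycle}: introduce fresh existentials $x_1,x_2,x_3,x_4$ and fresh universals $u_1^x,u_2^x,u_3^x,u_4^x$; replace the two $x$-literals of $\varphi$ by $x_1$ and $x_2$, replace the two $\overline{x}$-literals by $\overline{x}_3$ and $\overline{x}_4$, and append the padded cycle
\[
(\overline{x}_1\vee x_2\vee u_1^x)\wedge(\overline{x}_2\vee x_3\vee u_2^x)\wedge(\overline{x}_3\vee x_4\vee u_3^x)\wedge(\overline{x}_4\vee x_1\vee u_4^x).
\]
When every $u_i^x$ is $F$ this reduces to the 2-clause cycle $x_1\to x_2\to x_3\to x_4\to x_1$ and binds $x_1=x_2=x_3=x_4$; for every other assignment, setting the copies equal still makes each $\overline{x}_i\vee x_{i+1}$ true so the cycle is slack. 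Second, a \emph{rename} to balance polarities: after the cycle the direct count gives $x_1,x_2$ pattern $(2,1)$ but $x_3,x_4$ pattern $(1,2)$, so I syntactically replace every occurrence of the literal $x_3$ throughout the formula by $\overline{w}_3$ (and every $\overline{x}_3$ by $w_3$), and analogously $x_4\leftrightarrow\overline{w}_4$. This flips the polarity counts of those two variables, giving $w_3,w_4$ pattern $(2,1)$; the cycle under the rewrite enforces $x_1=x_2=\overline{w}_3=\overline{w}_4$, which is consistent with $w_i$ standing for $\overline{x}$ at the source's negated positions.

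Third, I \emph{absorb} the new universals. Each $u_i^x$ has pattern $(1,0)$ from its cycle clause and needs one additional negated appearance. I partition the $4p$ new universals into triples (after padding $\Phi_1$ if necessary with a constant-size tautology gadget over two fresh $(2,2)$-existentials and two fresh $(1,1)$-universals to make the number of existentials divisible by $3$) and attach to each triple the gadget
\[
P_3(u',u'',u''';a,b,c)=(\overline{u}'\vee a\vee b)\wedge(\overline{u}''\vee a\vee c)\wedge(\overline{u}'''\vee b\vee c)\wedge(\overline{a}\vee\overline{b}\vee\overline{c}),
\]
where $a,b,c$ are fresh existentials. A case analysis over the eight assignments to $(u',u'',u''')$ shows that $P_3$ is a $\forall\exists$-tautology (for instance, when all three universals are $T$, the assignment $a=T$, $b=T$, $c=F$ satisfies all four clauses). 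The gadget contributes exactly $(0,1)$ to each participating universal, completing it to $(1,1)$, and pattern $(2,1)$ to each of $a,b,c$.

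Correctness is then checked in both directions. From source to target: given a satisfying existential assignment of $\Phi_1$ under a source universal assignment, set each $x_i$ to the value of $x$ (hence each $w_j$ to $\overline{x}$) and set each absorber's existentials according to its tautology witness; equating copies makes the cycle trivially satisfied regardless of the new universals. From target to source: fix any source universal assignment, extend it by setting every new universal to $F$, and take a target witness; the enforced cycles equate the copies, and projecting any copy's value onto $x$ yields a source witness. The main obstacle will be designing the absorber $P_3$ so that its existentials sit at the tight pattern $(2,1)$ while it remains a $\forall\exists$-tautology: a naive cycle without the rename step only achieves $(2,1)$ on half of the split copies, and any absorber that adds the missing $(0,1)$ to each new universal must carefully avoid over-using either polarity of its own existentials.
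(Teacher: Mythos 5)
Your proposal is correct and follows the same overall strategy as the paper: reduce from {\sc Balanced $\forall \exists$ 3-SAT-$(1,1,2,2)$}, split each $(2,2)$-existential into four copies joined by a cyclic chain of implications that is ``switched on'' only under a particular assignment of fresh universal variables, flip the polarity of the two copies standing at negated positions to land every copy at pattern $(2,1)$, and handle $(1,2)$ by global literal negation. The difference is in how the switchable equality is implemented. The paper pads each cycle clause with a negated fresh \emph{existential} $\overline{d_{i,k}}$ and attaches the two-clause enforcer $E_\forall(d_{i,k}) = (d_{i,k} \vee u_{i,k} \vee v_{i,k}) \wedge (d_{i,k} \vee \overline{u_{i,k}} \vee \overline{v_{i,k}})$; this is self-contained, since $d_{i,k}$ comes out at $(2,1)$ and the two fresh universals at $(1,1)$ with no further bookkeeping, and the binding is activated by choosing $u_{i,k}=v_{i,k}$. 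You instead place a fresh universal directly in each cycle clause, which leaves it at pattern $(1,0)$ and forces you to add the absorber $P_3$ and a mod-$3$ padding argument to supply the missing negated appearances. Your version does work: the uniform witness $a=b=T$, $c=F$ shows $P_3$ is a $\forall\exists$-tautology with its existentials at $(2,1)$, the polarity counts of $x_1,x_2,w_3,w_4$ all come out to $(2,1)$, and setting every new universal to $F$ in the backward direction correctly collapses each chain to $x_1 \Rightarrow x_2 \Rightarrow \overline{w}_3 \Rightarrow \overline{w}_4 \Rightarrow x_1$. The only soft spots are that the divisibility-restoring gadget is asserted rather than exhibited (e.g.\ $(u \vee a \vee b)\wedge(\overline{u}\vee\overline{a}\vee\overline{b})\wedge(v \vee a \vee b)\wedge(\overline{v}\vee\overline{a}\vee\overline{b})$ would do), and the tautology check for $P_3$ is spelled out for only one of the eight cases; both are routine. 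On balance the paper's enforcer buys locality — no global counting or padding — while yours uses fewer auxiliary variables per source existential; neither issue affects correctness.
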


\begin{proof}
We first establish the theorem for $(t_1,t_2)=(2,1)$. Throughout the proof, we make use of the following quantified enforcer for an existential variable $d_{i,k}$:
\[
E_\forall(d_{i,k}) = (d_{i,k} \vee u_{i,k} \vee v_{i,k} )\wedge (d_{i,k} \vee \overline{u_{i,k}} \vee \overline{v_{i,k}}),
\]
where $u_{i,k}$ and $v_{i,k}$ are new  universal variables for some $i,k\in\mathbb{Z}^+$. The following property of $E_\forall(d_{i,k})$ is easy to verify.
\begin{description}
\item (P) The Boolean formula $\forall u_{i,k}\forall v_{i,k} \exists d_{i,k}E_\forall(d_{i,k})$ is a yes-instance of {\sc $\forall \exists$ 3-SAT}. In particular, if a truth assignment $\beta$ for $\{d_{i,k},u_{i,k},v_{i,k}\}$ has the property that $\beta(d_{i,k})=T$, then $\beta$ satisfies  $E_\forall(d_{i,k})$. Furthermore, if $\beta$ satisfies $E_\forall(d_{i,k})$ and $\beta(u_{i,k})=\beta(v_{i,k})$, then this implies that $\beta(d_{i,k})=T$.
\end{description}

As {\sc $\forall \exists$ 3-SAT-$(1,1,2,1)$} is a special case  of {\sc $\forall \exists$ 3-SAT}, it follows that the former problem is in $\Pi_2^P$. 
We show $\Pi_2^P$-hardness by a reduction from {\sc Balanced $\forall \exists$ 3-SAT-$(1,1,2,2)$}, for which $\Pi_2^P$-completeness was established in Theorem~\ref{thm:balanced_sep}. Let $$\Phi_1=\forall X_1^p\exists Y_{p+1}^{2p}\varphi$$ be an instance of {\sc Balanced $\forall \exists$ 3-SAT-$(1,1,2,2)$}. The reduction has two steps:
\begin{enumerate}
\item For each existential variable $y_i$ of $\Phi_1$ with $i\in\{p+1,p+2,\ldots,2p\}$, replace the first  (resp.\ second) unnegated appearance of $y_i$ with $y_{i,1}$ (resp. $y_{i,2}$), replace the first  (resp.\ second) negated appearance of $y_i$ with the negation of $y_{i,3}$ (resp. $y_{i,4}$), and  add the new clauses
\begin{align*}
&(\overline{y_{i,1}} \vee y_{i, 2} \vee \overline{d_{i, 1}}) \wedge E_\forall(d_{i,1}) \wedge (\overline{y_{i, 2}} \vee y_{i, 3} \vee \overline{d_{i, 2}}) \wedge E_\forall(d_{i,2})  \wedge {} \\ 
&(\overline{y_{i,3}} \vee y_{i, 4} \vee \overline{d_{i, 3}}) \wedge E_\forall(d_{i,3}) \wedge (\overline{y_{i,4}} \vee y_{i, 1} \vee \overline{d_{i, 4}}) \wedge E_\forall(d_{i,4}),
\end{align*}
to $\Phi_1$, where each $d_{i,k}$ with $k \in \{1, 2, 3, 4\}$ is a new existential variable. For all $i\in \{p+1, p+2, \ldots, 2p\}$ and $k\in \{1, 2, 3, 4\}$, append $y_{i, k}$ and $d_{i,k}$ to the list of existential variables and append $u_{i, k}$ and $v_{i, k}$ to the list of universal variables. 
\item For each existential variable $y_{i,k}$ with $k\in\{3,4\}$, replace each literal $y_{i,k}$ with $\overline{y_{i,k}}$ and each literal $\overline{y_{i,k}}$ with $y_{i,k}$. 
\end{enumerate}
Let $\Phi_2$ be the resulting quantified Boolean formula, and let $V_2$ be the set of variables of $\Phi_2$. Note that each existential variable $y_{i,k}$ with $k\in\{3,4\}$  appears exactly once unnegated and exactly twice negated in the Boolean formula resulting from Step 1. Hence, due to Step 2, it follows that $\Phi_2$ is an instance of {\sc $\forall \exists$ 3-SAT-$(1,1,2,1)$}. Furthermore, for each $i\in\{p+1,p+2,\ldots,2p\}$, the clauses introduced in Step 1 are replaced with the following clauses in the course of Step 2:
\begin{align*}
&(\overline{y_{i,1}} \vee y_{i, 2} \vee \overline{d_{i, 1}}) \wedge E_\forall(d_{i,1}) \wedge (\overline{y_{i, 2}} \vee \overline{y_{i, 3}} \vee \overline{d_{i, 2}}) \wedge E_\forall(d_{i,2})  \wedge {} \\ 
&(y_{i,3} \vee \overline{y_{i, 4}} \vee \overline{d_{i, 3}}) \wedge E_\forall(d_{i,3}) \wedge (y_{i,4} \vee y_{i, 1} \vee \overline{d_{i, 4}}) \wedge E_\forall(d_{i,4}).
\end{align*}

We complete the proof for $(t_1, t_2)=(2, 1)$ by establishing the following statement.

\begin{sublemma}\label{sublem:iff-2}
$\Phi_1$ is a yes-instance of {\sc Balanced $\forall \exists$ 3-SAT-$(1,1, 2, 2)$}  if and only if $\Phi_2$ is a yes-instance of {\sc $\forall \exists$ 3-SAT-$(1,1,2,1)$}.
\end{sublemma}

\begin{proof}
First, suppose that $\Phi_1$ is a yes-instance of {\sc Balanced $\forall \exists$ 3-SAT-$(1,1, 2, 2)$}. Let $\beta_1$ be a truth assignment that satisfies $\Phi_1$. For every truth assignment $\beta_2'$ for the universal variables in $$\{u_{i,k},v_{i,k}:i\in\{p+1,p+2,\ldots,2p\}\textnormal{ and } k\in\{1,2,3,4\}\},$$ we extend $\beta_2'$ to a truth assignment $\beta_2$ for $V_2$ as follows:
\begin{enumerate}[label=(\roman*), noitemsep]
\item set $\beta_2(x_i)=\beta_1(x_i)$ for each $i\in\{1,2,\ldots,p\}$;
\item set $\beta_2(y_{i,k})=\beta_1(y_i)$ for each $i\in\{p+1,p+2,\ldots,2p\}$ and $k\in\{1,2\}$;
\item set $\beta_2(y_{i,k})=\overline{\beta_1(y_i)}$ for each $i\in\{p+1,p+2,\ldots,2p\}$ and $k\in\{3,4\}$;
\item set $\beta_2(d_{i,k})=T$ for each $i\in\{p+1,p+2,\ldots,2p\}$ and $k\in\{1,2,3,4\}$.
\end{enumerate}
Due to (iv) and Property (P), it is now easily checked that $\Phi_2$ is a yes-instance of {\sc $\forall \exists$ 3-SAT-$(1,1,2,1)$}.

Second, suppose that $\Phi_2$ is a yes-instance of  {\sc $\forall \exists$ 3-SAT-$(1,1,2,1)$}. Let~$\beta_2$ be a truth assignment that satisfies $\Phi_2$ such that $\beta_2(u_{i, k})=\beta_2(v_{i, k})$ for each $i\in\{p+1,p+2,\ldots 2p\}$ and $k\in\{1,2,3,4\}$. Since $\Phi_2$ is a yes-instance, this implies that $\beta_2(d_{i,k})=T$ by Property (P). Moreover, by construction, we have $$\beta_2(y_{i,1})=\beta_2(y_{i,2})\quad\textnormal{ and }\quad \overline{\beta_2(y_{i,1})}=\beta_2(y_{i,3})=\beta_2(y_{i,4})$$ for each $i\in\{p+1,p+2,\ldots 2p\}$. It now follows that $\beta_1$ with
\begin{enumerate}[label=(\roman*), noitemsep]
\item $\beta_1(x_i)=\beta_2(x_i)$ for each $i\in\{1,2,\ldots,p\}$ and 
\item $\beta_1(y_{i})=\beta_2(y_{i,1})$ for each $i\in\{p+1,p+2,\ldots,2p\}$
\end{enumerate}
is a truth assignment for the set of variables of $\Phi_1$ that satisfies each clause in $\Phi_1$ and, thus, $\Phi_1$ is a yes-instance of {\sc Balanced $\forall \exists$ 3-SAT-$(1,1, 2, 2)$}.
\end{proof}
Noting that the size of $\Phi_2$ is polynomial in the size of $\Phi_1$, the theorem now follows for $(t_1,t_2)=(2,1)$. Moreover,  replacing $k\in\{3,4\}$ with $k\in\{1,2\}$ in Step 2 of the reduction and, subsequently, applying an argument that is analogous to~\ref{sublem:iff-2}, establishes the theorem for $(t_1,t_2)=(1,2)$.
\end{proof}

\section{Hardness of {\sc Monotone $\forall \exists$ NAE-3-SAT-$(s,t)$} with bounded variable appearances}

\subsection{Enforcers}\label{sec:enforcer-2}

In this section, we describe four monotone enforcers that have recently been introduced in an unquantified context by Darmann and D\"ocker~\cite{darmann19}. For the purposes of this section, we use their enforcers in a quantified setting. Specifically, for the first three enforcers, $x$ can be a universally or existentially quantified variable.

\noindent {\bf Auxiliary non-equality gadget.} First, consider the auxiliary non-equality gadget
\begin{align*}
\operatorname{NE_{aux}}(x, y) = &(x \vee y \vee a) \wedge (x \vee y \vee b) \wedge (a \vee b \vee u) \wedge {} \\ &(a \vee b \vee v) \wedge (a \vee b \vee w) \wedge (u \vee v \vee w), 
\end{align*}  
where $a, b, u, v, w$ are five new existential variables, $y$ is an existential variable, and $x$ is a universal or existential variable. To nae-satisfy the last clause, at least one variable in $\{u, v, w\}$ is set to be $T$ and at least one is set to be $F$. Then, by the three preceding clauses, we have that a truth assignment that nae-satisfies $\operatorname{NE_{aux}}(x, y)$ assigns different truth values to $a$ and $b$. The next observation  follows by construction of the first two clauses.

\begin{obs}\label{ob:NE_aux}
Consider the gadget $\operatorname{NE_{aux}}(x, y)$, and let $V$ be its associated set of variables. A truth assignment $\beta$ for $\{x,y\}$ can be extended to a truth assignment $\beta'$ for $V$ that nae-satisfies $\operatorname{NE_{aux}}(x, y)$ if and only if $\beta(x) \neq \beta(y)$.
\end{obs}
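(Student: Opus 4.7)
The plan is to prove both directions of the biconditional by a direct analysis of the six clauses of $\operatorname{NE_{aux}}(x,y)$, following the informal argument already sketched in the paragraph preceding the statement.

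For the forward direction, I would start from the last clause $(u \vee v \vee w)$: any nae-satisfying extension $\beta'$ must assign $T$ to some $u^+\in\{u,v,w\}$ and $F$ to some $u^-\in\{u,v,w\}$. Applying the nae-condition to the clauses $(a \vee b \vee u^+)$ and $(a \vee b \vee u^-)$ then forces, respectively, at least one of $\beta'(a),\beta'(b)$ to equal $F$ (so that the clause containing the true literal $u^+$ still has a false literal) and at least one to equal $T$ (so that the clause containing the false literal $u^-$ still has a true literal). Hence $\beta'(a)\neq\beta'(b)$. Combining this with the two remaining clauses $(x \vee y \vee a)$ and $(x \vee y \vee b)$ will then force $\{x,y\}$ to contain both a true and a false literal, giving $\beta(x)\neq\beta(y)$.

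For the reverse direction, I would exhibit the extension explicitly. Assuming without loss of generality that $\beta(x)=T$ and $\beta(y)=F$, the choice $\beta'(a)=T,\ \beta'(b)=F,\ \beta'(u)=T,\ \beta'(v)=F,\ \beta'(w)=T$ should suffice: a routine check verifies that each of the six clauses contains at least one literal evaluating to $T$ and at least one evaluating to $F$ under $\beta'$. The symmetric case $\beta(x)=F,\ \beta(y)=T$ is handled by swapping the assigned values of $a,b,u,v,w$.

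There is no substantive obstacle here: the argument is a mechanical clause-by-clause case analysis, and the only genuinely non-trivial step—that any nae-satisfying assignment of the last four clauses forces $\beta'(a)\neq\beta'(b)$—has already been isolated in the discussion preceding the statement. The main thing to be careful about is treating $u$, $v$, $w$ symmetrically in the forward direction so that the labels $u^+$ and $u^-$ cover all the possibilities allowed by the nae-condition on $(u\vee v\vee w)$.
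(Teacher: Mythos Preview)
Your proposal is correct and follows exactly the approach the paper takes: the paper's argument is given informally in the paragraph preceding the statement (last clause forces $\{u,v,w\}$ to be mixed, the three middle clauses then force $\beta'(a)\neq\beta'(b)$, and the first two clauses force $\beta(x)\neq\beta(y)$), and your forward direction makes this precise while your explicit extension for the reverse direction supplies what the paper leaves implicit with ``follows by construction.''
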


\noindent {\bf Equality gadget.} The second enforcer is the equality gadget
\[
\operatorname{EQ}(x, y) = \operatorname{NE_{aux}}(p, q) \wedge \operatorname{NE_{aux}}(p, r)  \wedge (x \vee q \vee r) \wedge (y \vee q \vee r),
\]  
where $p, q, r$ are three new existential variables, $y$ is an existential variable, and $x$ is a universal or existential variable. By construction and Observation~\ref{ob:NE_aux}, a truth assignment that nae-satisfies $\operatorname{EQ}(x, y)$ assigns the same truth value to $q$ and $r$. The next observation follows by construction of the last two clauses in the equality gadget.

\begin{obs}\label{ob:EQ}
Consider the gadget $\operatorname{EQ}(x, y)$, and let $V$ be its associated set of variables. A truth assignment $\beta$ for $\{x,y\}$ can be extended to a truth assignment $\beta'$ for $V$ that nae-satisfies $\operatorname{EQ}(x, y)$ if and only if $\beta(x) = \beta(y)$.
\end{obs}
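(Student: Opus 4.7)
The plan is to prove the two directions separately, exploiting the previously established Observation~\ref{ob:NE_aux} about the auxiliary non-equality gadget to reduce the argument to reasoning about the two trailing clauses $(x \vee q \vee r)$ and $(y \vee q \vee r)$.

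For the forward direction, I would start by assuming that $\beta$ extends to some $\beta'$ that nae-satisfies $\operatorname{EQ}(x,y)$. Since the first two conjuncts are $\operatorname{NE_{aux}}(p,q)$ and $\operatorname{NE_{aux}}(p,r)$, Observation~\ref{ob:NE_aux} forces $\beta'(p)\neq\beta'(q)$ and $\beta'(p)\neq\beta'(r)$, and because truth values are binary, this yields $\beta'(q)=\beta'(r)$. Now consider the clause $(x\vee q\vee r)$: two of its literals already agree, so for nae-satisfaction the third literal must take the opposite value, i.e.\ $\beta(x)\neq\beta'(q)$. Applying the same reasoning to $(y\vee q\vee r)$ gives $\beta(y)\neq\beta'(q)$, and combining these two inequalities yields $\beta(x)=\beta(y)$.

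For the reverse direction, assume $\beta(x)=\beta(y)$ and construct an explicit extension: set $\beta'(q)=\beta'(r)=\overline{\beta(x)}$ and $\beta'(p)=\beta(x)$. Then $\beta'(p)\neq\beta'(q)$ and $\beta'(p)\neq\beta'(r)$, so by Observation~\ref{ob:NE_aux} the assignment $\beta'$ can be further extended over the fresh variables of the two $\operatorname{NE_{aux}}$ gadgets so that both are nae-satisfied. For the last two clauses, $x$ and $y$ evaluate to one truth value while $q$ and $r$ both evaluate to the opposite value, so each of $(x\vee q\vee r)$ and $(y\vee q\vee r)$ contains both a true and a false literal and is thus nae-satisfied.

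Neither direction presents a substantive obstacle; the whole statement is essentially syntactic once Observation~\ref{ob:NE_aux} has been invoked to pin down $\beta'(q)=\beta'(r)$. The only subtlety to flag is to ensure that the fresh variables $p,q,r$ introduced by $\operatorname{EQ}$ are disjoint from those introduced by each copy of $\operatorname{NE_{aux}}$, so that the two gadget extensions guaranteed by Observation~\ref{ob:NE_aux} do not conflict with the explicit values chosen for $p$, $q$, $r$; this is immediate from the convention that every invocation of an enforcer uses a fresh batch of variables.
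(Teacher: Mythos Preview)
Your proposal is correct and follows essentially the same approach as the paper: the paper remarks just before the observation that, by construction and Observation~\ref{ob:NE_aux}, any nae-satisfying assignment forces $\beta'(q)=\beta'(r)$, and then states that the result follows from the last two clauses. Your write-up simply makes both directions explicit, which is exactly what the paper's terse justification is gesturing at.
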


\noindent {\bf Non-equality gadget.} Combining the first and second enforcer, we now obtain another non-equality gadget:
\[
\operatorname{NE}(x, y) = \operatorname{EQ}(x, p) \wedge \operatorname{EQ}(y, q) \wedge \operatorname{NE_{aux}}(p, q), 
\]  
where $p$ and $q$ are two new existential variables, $y$ is an existential variable, and $x$ is a universal or existential variable. The next observation follows immediately by construction, and Observations~\ref{ob:NE_aux} and \ref{ob:EQ}. 

\begin{obs}\label{ob:NE}
Consider the gadget $\operatorname{NE}(x, y)$, and let $V$ be its associated set of variables. A truth assignment $\beta$ for $\{x,y\}$ can be extended to a truth assignment $\beta'$ for $V$ that nae-satisfies $\operatorname{NE}(x, y)$ if and only if $\beta(x) \neq \beta(y)$.
\end{obs}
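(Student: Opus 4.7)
The plan is to obtain Observation~\ref{ob:NE} directly from the three building blocks by tracking truth values through the chain $x \leftrightarrow p,\ y \leftrightarrow q,\ p \neq q$. The three sub-gadgets $\operatorname{EQ}(x,p)$, $\operatorname{EQ}(y,q)$ and $\operatorname{NE_{aux}}(p,q)$ share only the ``interface'' variables $x,y,p,q$; all other new variables introduced by the respective sub-gadgets are pairwise disjoint (this is the standard convention stated earlier in the paper for composing copies of enforcers). So the sub-gadgets can be satisfied independently once the values of $x,y,p,q$ are fixed.

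For the forward direction, suppose $\beta(x) \neq \beta(y)$. I would set $\beta'(p) := \beta(x)$ and $\beta'(q) := \beta(y)$ so that $\beta'(x)=\beta'(p)$, $\beta'(y)=\beta'(q)$, and $\beta'(p)\neq\beta'(q)$. By Observation~\ref{ob:EQ}, $\beta'$ then extends on the fresh variables of $\operatorname{EQ}(x,p)$ to a nae-satisfying assignment of $\operatorname{EQ}(x,p)$; analogously for $\operatorname{EQ}(y,q)$. By Observation~\ref{ob:NE_aux}, $\beta'$ further extends on the fresh variables of $\operatorname{NE_{aux}}(p,q)$ to a nae-satisfying assignment of $\operatorname{NE_{aux}}(p,q)$. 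Since the fresh variable sets are disjoint, these three extensions combine into a single truth assignment $\beta'$ on $V$ that nae-satisfies $\operatorname{NE}(x,y)$.

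For the backward direction, suppose $\beta'$ extends $\beta$ and nae-satisfies $\operatorname{NE}(x,y)$. Then $\beta'$ nae-satisfies each of the three sub-gadgets in particular, so Observation~\ref{ob:EQ} (applied twice) yields $\beta'(x)=\beta'(p)$ and $\beta'(y)=\beta'(q)$, while Observation~\ref{ob:NE_aux} yields $\beta'(p)\neq\beta'(q)$. Chaining these gives $\beta(x)=\beta'(x)=\beta'(p)\neq\beta'(q)=\beta'(y)=\beta(y)$, as required.

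There is no real obstacle here: the observation is a direct consequence of the previously established Observations~\ref{ob:NE_aux} and~\ref{ob:EQ} together with the disjointness convention on fresh gadget variables. The only point that warrants a sentence of care is precisely this disjointness, since it is what lets the forward extension be defined consistently across the three sub-gadgets.
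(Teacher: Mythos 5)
Your proposal is correct and follows exactly the route the paper takes: the paper states that Observation~\ref{ob:NE} ``follows immediately by construction, and Observations~\ref{ob:NE_aux} and \ref{ob:EQ},'' and your write-up simply makes explicit the chaining $x \leftrightarrow p$, $y \leftrightarrow q$, $p \neq q$ and the disjointness of the fresh variable sets that the paper leaves implicit. No discrepancy to report.
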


The next observation follows by construction of the last three enforcers. 
\begin{obs}\label{ob:varInc}
Let $\mathcal{E}$ be an enforcer in $\{\operatorname{NE_{aux}}(x, y), \operatorname{EQ}(x, y), \operatorname{NE}(x, y)\}$. Then each  variable introduced by $\mathcal{E}$ appears at most four times in $\mathcal{E}$.
\end{obs}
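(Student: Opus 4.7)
The plan is to verify the bound by direct counting on each of the three enforcers, proceeding bottom-up so that nested gadgets are handled by the preceding case. First I would tabulate, for the base gadget $\operatorname{NE_{aux}}(x,y)$, the appearances of each introduced variable across its six clauses: $a$ occurs once in $(x\vee y\vee a)$ and once in each of $(a\vee b\vee u),(a\vee b\vee v),(a\vee b\vee w)$, so four times in total; $b$ is symmetric; and $u,v,w$ each appear only twice (once in their respective clause with $a,b$ and once in $(u\vee v\vee w)$). The bound of four is therefore already saturated at this level, but not exceeded.

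For $\operatorname{EQ}(x,y)$ and $\operatorname{NE}(x,y)$, the newly introduced variables split into two groups: variables inherited from nested $\operatorname{NE_{aux}}$ or $\operatorname{EQ}$ subgadgets, which occur only inside their defining subgadget and are thus covered by the previous case; and the fresh top-level variables ($p,q,r$ in $\operatorname{EQ}$; $p,q$ in $\operatorname{NE}$). For the latter, the count reduces to recording which argument slot of each subgadget a variable occupies. The relevant slot-counts, read directly off the formulas, are: two appearances for either argument of $\operatorname{NE_{aux}}$ (from the first two clauses), and one appearance for either argument of $\operatorname{EQ}$ (from the two trailing clauses $(x\vee q\vee r)$ and $(y\vee q\vee r)$).

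With these slot-counts in hand, the check is immediate. In $\operatorname{EQ}(x,y)$, the variable $p$ occupies the first argument of both $\operatorname{NE_{aux}}$ copies, contributing $2+2=4$ appearances; each of $q$ and $r$ occupies the second argument of one $\operatorname{NE_{aux}}$ and appears once in each of the two trailing clauses, again totalling $2+1+1=4$. In $\operatorname{NE}(x,y)$, the variable $p$ appears once as the second argument of $\operatorname{EQ}(x,p)$ and twice as the first argument of $\operatorname{NE_{aux}}(p,q)$, giving $3$; $q$ is symmetric.

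Since the observation is a purely syntactic statement about a bounded-size family of gadgets, no substantive obstacle is anticipated. The only pitfall is to correctly inherit the slot-counts when gadgets are nested, which is why I would handle the three cases in the order $\operatorname{NE_{aux}},\operatorname{EQ},\operatorname{NE}$ so that each composite gadget can reuse the tallies already established for its constituents.
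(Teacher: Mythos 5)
Your counts are correct and complete: the paper itself offers no explicit proof, merely asserting that the observation ``follows by construction,'' and your bottom-up tally (handling $\operatorname{NE_{aux}}$ first, then reusing the slot-counts of two appearances per argument of $\operatorname{NE_{aux}}$ and one per argument of $\operatorname{EQ}$ when counting $p,q,r$ in the composite gadgets) is exactly the direct inspection the authors had in mind. Nothing to add.
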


\noindent {\bf Padding gadget.} The fourth enforcer is the gadget
\begin{align*}
\operatorname{P1}(x) = &(x \vee a \vee b) \wedge (a \vee c \vee d) \wedge (a \vee b \vee e) \wedge {} \\ & (a \vee d \vee e) \wedge  (b \vee c \vee d) \wedge  (b \vee c \vee e) \wedge  (c \vee d \vee e),
\end{align*}
where $x$ is an existential variable, and $a, b, c, d, e$ are five new existential variables each of which appears exactly four times in the gadget. For the truth assignment $\beta$ for $\{a,b,c,d,e\}$ with $\beta(a)=\beta(c)=\beta(e)=T$ and $\beta(b)=\beta(d)=F$, the next observation follows immediately by construction.
\begin{obs}\label{ob:padding-gadget}
The gadget $\operatorname{P1}(x)$  is nae-satisfiable.~Moreover,~every~truth assignment for $\{x\}$ can be extended to a truth assignment for $\{a,b,c,d,e,x\}$ that nae-satisfies $\operatorname{P1}(x)$.
\end{obs}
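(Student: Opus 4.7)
The plan is to prove the observation by direct exhibition of the assignment already flagged in the statement. Specifically, I would set $\beta(a)=\beta(c)=\beta(e)=T$ and $\beta(b)=\beta(d)=F$, and leave $\beta(x)$ arbitrary. The goal is then to verify that this extension nae-satisfies each of the seven clauses of $\operatorname{P1}(x)$ regardless of the truth value of $x$, which establishes both parts of the statement in one stroke: $\operatorname{P1}(x)$ is nae-satisfiable, and any truth assignment on $\{x\}$ extends as required.

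The crucial structural observation, and the only step where any thought is needed, is to notice that the six clauses $(a \vee c \vee d)$, $(a \vee b \vee e)$, $(a \vee d \vee e)$, $(b \vee c \vee d)$, $(b \vee c \vee e)$, and $(c \vee d \vee e)$ each contain at least one variable from $\{a,c,e\}$ (all set to $T$) and at least one variable from $\{b,d\}$ (both set to $F$). Therefore each of these six clauses has both a true literal and a false literal under $\beta$, and is nae-satisfied. For the remaining clause $(x \vee a \vee b)$, observe that under $\beta$ the literals $a$ and $b$ already take opposite truth values, so this clause is nae-satisfied irrespective of $\beta(x)$.

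I do not expect any genuine obstacle: the whole argument reduces to the routine case-check indicated in the paragraph preceding the observation, and the only decision is to present the verification in a way that makes the partition $\{a,c,e\}$ versus $\{b,d\}$ explicit, so that writing out all seven clauses individually can be avoided.
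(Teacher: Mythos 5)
Your proposal is correct and follows exactly the paper's approach: the paper exhibits the same assignment $\beta(a)=\beta(c)=\beta(e)=T$, $\beta(b)=\beta(d)=F$ immediately before the observation and asserts the verification is immediate. Your explicit partition of the variables into $\{a,c,e\}$ and $\{b,d\}$ is just a cleaner way of organizing the same routine check.
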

\noindent Intuitively, ${P1}(x)$ is used to increase the number of appearances of existential variables in a Boolean formula until each variable appears exactly four times.

\subsection{Hardness of {\sc Monotone $\forall \exists$ NAE-3-SAT-$(s,t)$}} 
In this section, we establish that a monotone and linear Boolean formula $\varphi$ of $\forall \exists$ NAE-3-SAT is complete for the second level of the polynomial hierarchy even if each clause in $\varphi$ contains at most one universal variable and, amongst the clauses in $\varphi$, each universal universal variable appears exactly once and each existential variable appears exactly three times. We start by establishing a slightly weaker result without linearity.

\begin{proposition}\label{prop:nae}
{\sc Monotone $\forall \exists$ NAE-3-SAT-$(1, 4)$} is $\Pi_2^P$-complete if each clause contains at most one universal variable.
\end{proposition}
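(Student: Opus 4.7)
Membership in $\Pi_2^P$ is immediate, since this is a restriction of $\forall\exists$ NAE-3-SAT. For $\Pi_2^P$-hardness, the plan is to reduce from $\forall\exists$ NAE-3-SAT (which is $\Pi_2^P$-complete by Eiter and Gottlob~\cite{eiter95}) using the monotone gadgets $\operatorname{NE_{aux}}$, $\operatorname{EQ}$, $\operatorname{NE}$, and $\operatorname{P1}$ of Section~\ref{sec:enforcer-2}.

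Given an instance $\Phi = \forall X_1^p \exists X_{p+1}^n \varphi$, the reduction will proceed in three steps. First, for each universal variable $u_i$ I would introduce a fresh universal $u'_i$ and a fresh existential $e_i$, add the gadget $\operatorname{EQ}(u'_i, e_i)$ with $u'_i$ playing the universal role, and replace every signed occurrence of $u_i$ in $\varphi$ by the correspondingly signed literal of $e_i$. By Observation~\ref{ob:EQ} this forces $e_i=u'_i$, so nae-satisfiability is preserved; moreover $u'_i$ now occurs only in the single clause $(u'_i \vee q_i \vee r_i)$ of its EQ gadget, so every universal variable appears exactly once and every clause of the new formula contains at most one universal variable. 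Any 2-clauses still present can be replaced outright by $\operatorname{NE}(\cdot,\cdot)$ gadgets, since a 2-clause of NAE-SAT is nae-satisfied precisely when its two literals disagree (Observation~\ref{ob:NE}). Second, I would make the formula monotone: for every existential variable $y$ that occurs negated, introduce a fresh existential $y'$ with $\operatorname{NE}(y, y')$ and replace every $\bar{y}$ in the main clauses by $y'$; since every NE-gadget variable is existential, the single-appearance property of the universal variables established in the first step is preserved.

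Third, I would normalize the number of appearances of each existential variable to exactly four. For each existential variable $z$ currently appearing $k$ times in the (now monotone) formula, introduce copies $z_1,\ldots,z_k$, replace the $i$-th occurrence by $z_i$, and add the equality cycle $\operatorname{EQ}(z_1,z_2)\wedge\cdots\wedge\operatorname{EQ}(z_k,z_1)$; by Observation~\ref{ob:EQ} all copies must agree, and each $z_i$ now appears exactly three times. Finally, for each existential variable---including the auxiliary variables of the gadgets, which by Observation~\ref{ob:varInc} appear at most four times---that still appears fewer than four times, pad with the appropriate number of copies of $\operatorname{P1}(\cdot)$; by Observation~\ref{ob:padding-gadget} any partial assignment extends through $\operatorname{P1}$, while $\operatorname{P1}(x)$ contains $x$ exactly once and each of its five auxiliary variables exactly four times.

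The main obstacle I expect is the bookkeeping: after each round of gadget insertions one must re-verify that every clause still contains at most one universal variable, and track the precise appearance count of every variable so that the final totals land exactly on $(s,t)=(1,4)$. Correctness will then follow from two standard extension arguments: in the forward direction, any nae-satisfying witness for $\Phi$ extends through Observations~\ref{ob:NE_aux}--\ref{ob:padding-gadget} to a nae-satisfying assignment of the reduced formula; in the backward direction, Observations~\ref{ob:EQ} and~\ref{ob:NE} force the gadget-imposed equalities and inequalities, so that a nae-satisfying assignment of the reduced formula projects back to a nae-satisfying witness for $\Phi$.
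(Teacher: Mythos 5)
Your proposal is correct and follows essentially the same route as the paper: reduce from {\sc $\forall \exists$ NAE-3-SAT}, attach an $\operatorname{EQ}$ gadget tying each universal variable to an existential proxy (so universals appear exactly once and at most one per clause), decouple the occurrences of each existential variable with $\operatorname{EQ}$/$\operatorname{NE}$ gadgets to eliminate negations, and pad every existential variable to four appearances with $\operatorname{P1}$. The paper only factors the middle step differently---it splits each occurrence into a fresh variable first and then chains the unnegated copies and the negated copies with $\operatorname{EQ}$, bridged by a single $\operatorname{NE}$---and it sidesteps your 2-clause case (where, note, a mixed-sign 2-clause $(x \vee \bar{y})$ would call for $\operatorname{EQ}$ rather than $\operatorname{NE}$) by assuming from the outset that every clause has exactly three literals.
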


\begin{proof}
Clearly, the decision problem {\sc Monotone $\forall \exists$ NAE-3-SAT-$(1, 4)$} as described in the statement of the proposition is in $\Pi_2^P$. We show that it is $\Pi^P_2$-complete by a reduction from {\sc $\forall \exists$ NAE-3-SAT}. For the latter problem, $\Pi_2^P$-completeness was established by Eiter and Gottlob~\cite{eiter95}. Let
\[
\Phi_1 = \forall X_1^p \exists Y_{p+1}^n \varphi
\]
be an instance of {\sc $\forall \exists$ NAE-3-SAT} over a set $V_1 = X_1^p \cup Y_{p+1}^n$ of variables. We may assume that each clause contains exactly three literals and at most one duplicate literal.  

In what follows, we  construct two quantified Boolean formulas that include copies of the enforcers introduced in Section~\ref{sec:enforcer-2}. Each such enforcer adds several new existential variables. For ease of exposition throughout this proof, we use $A$ to denote the set of all new existential variables that are introduced by a copy of an enforcer in  $$\{\operatorname{NE_{aux}}(x, y), \operatorname{EQ}(x, y), \operatorname{NE}(x, y),\operatorname{P1}(x)\}.$$ In particular,  $A$ is initially empty and, each time we use a new enforcer copy, we  add the newly introduced variables to $A$ and append them to the list of existential variables without mentioning it explicitly. We remark that it will always be clear from the context that the number of elements in $A$ is polynomial in the size of $\Phi_1$.

Now, let 
\[
\Phi_2 = \forall Z_1^p \exists Y_1^n \exists A \varphi[x_1 \mapsto y_1, \ldots, x_p \mapsto y_p] \wedge \bigwedge_{i=1}^p \operatorname{EQ}(z_i, y_i) 
\] 
be the quantified Boolean formula obtained from $\Phi_1$ by first creating a copy $z_i$ of each of the universal variables $x_i$, replacing each universal variable $x_i$ of $\Phi_1$ with a new existential variable $y_i$, and then, for all $i\in \{1, 2, \ldots, p\}$, adding the enforcer $\operatorname{EQ}(z_i, y_i)$, where $z_i$ is a new universal variable. Furthermore, let $V_2' = Z_1^p \cup Y_1^n$, and let $V_2=V_2'\cup A$. By construction, each clause in $\Phi_2$ contains at most one universal variable and  each universal variable appears exactly once in~$\Phi_2$.

\begin{sublemma-prop}
$\Phi_1$ is a yes-instance of  {\sc $\forall \exists$ NAE-3-SAT}  if and only if $\Phi_2$ is a yes-instance of  {\sc $\forall \exists$ NAE-3-SAT}. 
\end{sublemma-prop}

\begin{proof}
First, suppose that $\Phi_1$ is a yes-instance of {\sc $\forall \exists$ NAE-3-SAT}. Let $\beta_1$ be a truth assignment for $V_1$ that nae-satisfies $\Phi_1$, and let $\beta_2'$ be the following truth assignment for $V_2'$:
\begin{enumerate}[label=(\roman*), noitemsep]
\item set $\beta_2'(y_i) = \beta_1(y_i)$ for each $i \in \{p+1,p+2,\ldots,n\}$;
\item set $\beta_2'(z_i) = \beta_1(x_i)$ for each $i \in \{1,2,\ldots,p\}$;
\item set $\beta_2'(y_i) = \beta_1(x_i)$ for each $i \in \{1,2,\ldots,p\}$.
\end{enumerate}
By (iii) and Observation~\ref{ob:EQ}, it follows that there is a truth assignment $\beta_2$ for $V_2$ that extends $\beta_2'$ such that, for each $i\in\{1,2,\ldots,p\}$, the clauses of $\operatorname{EQ}(z_i, y_i)$ are nae-satisfied. Furthermore, since $\Phi_1$ is nae-satisfiable for every truth assignment for $X_1^p$, if follows that $\Phi_2$ is nae-satisfiable for every truth assignment for $Z_1^p$. Hence, $\Phi_2$ is a yes instance of {\sc $\forall \exists$ NAE-3-SAT}.

Second, suppose that $\Phi_2$ is a yes-instance of {\sc $\forall \exists$ NAE-3-SAT}. Let $\beta_2$ be a truth assignment for $V_2$ that nae-satisfies $\Phi_2$. By Observation~\ref{ob:EQ}, it follows that $\beta_2(z_i) = \beta_2(y_i)$ for each $i \in \{1,2,\ldots,p\}$. Hence $\beta_1$ with $\beta_1(x_i) = \beta_2(z_i)$ for each $i \in \{1,2,\ldots,p\}$, and $\beta_1(y_i) = \beta_2(y_i)$ for each $i \in \{p+1,p+2,\ldots,n\}$ is a truth assignment for $V_1$ that nae-satisfies $\Phi_1$. Furthermore, since $\Phi_2$ is nae-satisfiable for every truth assignment for $Z_1^p$, it follows that $\Phi_1$ is a yes-instance of {\sc $\forall \exists$ NAE-3-SAT}.     
\end{proof}

Next, following Darmann and D\"ocker~\cite[Theorem 1]{darmann19}, we transform $\Phi_2$ into a new quantified Boolean formula in four steps: 
\begin{enumerate}
\item To remove all negated variables, we start by replacing each appearance of an existential variable in $Y_1^n$ with a new unnegated variable. Specifically, for each existential variable $y_i \in Y_1^n$, let $u(y_i)$ and $n(y_i)$ be the number of unnegated and negated appearances, respectively, of $y_i$ in the Boolean formula of $\Phi_2$. Recall that $u(y_i)+n(y_i)=a(y_i)$. Now, for each $j\in\{1,2,\ldots,u(y_i)\}$, replace the $j$-th unnegated appearance of $y_i$ in $\Phi_2$ with $y_{i,j}$. Similarly, for each $j\in\{1,2,\ldots,n(y_i)\}$, replace the $j$-th negated appearance of $y_i$ in $\Phi_2$ with $y_{i,u(y_i)+j}$. Lastly, for all $i \in \{1,2, \ldots, n\}$, append
\[
\exists y_{i,1} \exists y_{i,2} \cdots \exists y_{i,a(y_i)} 
\]   
to the list of existential variables and remove the obsolete variables~$\exists Y_1^n$.  
\item If $u(y_i)>1$, introduce the clauses
\[
\bigwedge_{i = 1}^n \quad\bigwedge_{j = 1}^{u(y_i)-1} \operatorname{EQ}(y_{i,j}, y_{i,j+1}).
\]  
Similarly, if $n(y_i)>1$, introduce the clauses
\[
\bigwedge_{i = 1}^n \quad \bigwedge_{j = u(y_i)+1}^{a(y_i)-1} \operatorname{EQ}(y_{i,j}, y_{i,j+1}). 
\]
\item For each $i\in\{1,2,\ldots,n\}$ with $u(y_i)\notin\{0,a(y_i)\}$, add the gadget $$\operatorname{NE}(y_{i, u(y_i)}, y_{i, u(y_i) + 1}).$$
\item Let $\Phi_2'$ be the quantified Boolean formula resulting from the last three steps. For $i\in \{1, 2, \ldots, n\}$, consider an existential variable $y_i$ in $\Phi_2$. If $y_i$  appears exactly once in $\Phi_2$, then $y_{i,1}$ only appears once in $\Phi_2'$ because Steps 2 and 3 do not introduce any gadget that adds an additional appearance of $y_{i,1}$. Otherwise, if  $y_i$ appears at  least twice in $\Phi_2$, then the enforcers introduced in the previous two steps increase the number of appearances for each variable $y_{i,j}$ with $j\in\{1,2,\ldots,a(y_i)\}$ by at least one and at most two. Hence, each variable $y_{i,j}$ appears at most three times in $\Phi_2'$. Moreover, by construction and Observation~\ref{ob:varInc}, each variable in~$A$ appears at most four times in $\Phi_2'$. Now, for each existential variable $v$ in $\Phi_2'$ (this includes all variables in $A$), add the clauses $$\bigwedge_{k=1}^{4-a(v)} \operatorname{P1}(v)$$ to $\Phi_2'$, where $a(v)$ denotes the number of appearances of $v$ in $\Phi_2'$.
\end{enumerate}
Let $\Phi_3$ be the quantified Boolean formula constructed by the preceding four-step procedure. Furthermore, let $V_3$ be the set of variables of $\Phi_3$, and let $V_3'=V_3-A$. 

\begin{sublemma-prop}
$\Phi_2$ is a yes-instance of {\sc $\forall \exists$ NAE-3-SAT}  if and only if $\Phi_3$ is a yes-instance of {\sc $\forall \exists$ NAE-3-SAT}. 
\end{sublemma-prop}

\begin{proof}
First, suppose that $\Phi_2$ is a yes-instance of {\sc $\forall \exists$ NAE-3-SAT}. Let $\beta_2$ be a truth assignment for $V_2$ that nae-satisfies $\Phi_2$. Obtain a truth assignment $\beta_3'$ for $V_3'$ as follows:
\begin{enumerate}[label=(\roman*), noitemsep]
\item set $\beta_3'(z_i) = \beta_2(z_i)$ for each $i \in \{1,2,\ldots,p\}$;
\item set $\beta_3'(y_{i,j}) = \beta_2(y_i)$ for each $i \in \{1,2,\ldots,n\}$ and $j \in \{1,2,\ldots,u(y_i)\}$;
\item set $\beta_3'(y_{i,j}) = \overline{\beta_2(y_i)}$ for each $i \in \{1,2,\ldots,n\}$ and $j \in \{u(y_i)+1,u(y_i)+2,\ldots,a(y_i)\}$.
\end{enumerate}
By (ii) and (iii) as well as Observations~\ref{ob:EQ},~\ref{ob:NE}, and~\ref{ob:padding-gadget}, it follows that there is a truth assignment $\beta_3$ for $V_3$ that extends $\beta_3'$ and nae-satisfies $\Phi_3$. Moreover, it follows by construction that for every truth assignment for $Z_1^p$, there exists a truth assignment for $V_3 - Z_1^p$ that nae-satisfies $\Phi_3$. Hence, $\Phi_3$ is a yes-instance of {\sc $\forall \exists$ NAE-3-SAT}.  

Second, suppose that $\Phi_3$ is a yes-instance of {\sc $\forall \exists$ NAE-3-SAT}. Let $\beta_3$ be a truth assignment for $V_3$ that nae-satisfies $\Phi_3$. By Steps 2 and 3 of the construction, and by Observations~\ref{ob:EQ} and \ref{ob:NE}, we have
\begin{enumerate}[label=(\Roman*), noitemsep]
\item $\beta_3(y_{i, 1}) = \beta_3(y_{i,2}) = \cdots = \beta_3(y_{i,u(y_i)})$,
\item $\beta_3(y_{i, u(y_i)}) \ne \beta_3(y_{i, u(y_i) + 1})$, and
\item $\beta_3(y_{i, u(y_i) + 1}) = \beta_3(y_{i,u(y_i) + 2}) = \cdots = \beta_3(y_{i,a(y_i)})$
\end{enumerate}
 for all $i \in \{1,2,\ldots,n\}$. Now, obtain a truth assignment $\beta_2$ for $V_2$ as follows:
\begin{enumerate}[label=(\roman*), noitemsep]
\item set $\beta_2(z_i) = \beta_3(z_i)$ for each $i \in \{1,2,\ldots,p\}$; 
\item set $\beta_2(y_i) = \beta_3(y_{i,1})$ for each $i \in \{1,2,\ldots,n\}$ with $u(y_i) \geq 1$;
\item set $\beta_2(y_i) = \overline{\beta_3(y_{i,1})}$ for each $i \in \{1,2,\ldots,n\}$ with $u(y_i) = 0$;
\item set $\beta_2(a) = \beta_3(a)$ for each $a\in A$ with $a\in V_2$.
\end{enumerate} 
It is now straightforward to check that $\beta_2$ nae-satisfies $\Phi_2$ and, hence, $\Phi_2$ is a yes-instance of {\sc $\forall \exists$ NAE-3-SAT}. 
\end{proof}

We complete the proof by showing that $\Phi_3$ has the desired properties. First, since all enforcers introduced in Section~\ref{sec:enforcer-2} are monotone, it follows from Step 1 in the construction of $\Phi_3$ from $\Phi_2$ that $\Phi_3$ is monotone. Second, again by Step 1 in the construction of $\Phi_3$ from $\Phi_2$, it follows that each clause in $\Phi_3$ is a 3-clause that contains three distinct variables. Third, turning to the universal variables in $\Phi_3$ and as mentioned in the construction of $\Phi_2$, each clause in $\Phi_2$, and hence in $\Phi_3$, contains at most one universal variable and each universal variable in $\Phi_2$, and hence in $\Phi_3$, appears exactly once. Fourth, recalling Step 4 in the construction of $\Phi_3$ from $\Phi_2$ and  that each new existential variable of $\operatorname{P1}(v)$ appears exactly four times in the seven clauses associated with $\operatorname{P1}(v)$, it follows that each existential variable appears exactly four times in $\Phi_3$. Noting that the size of $\Phi_3$ is polynomial in the size of $\Phi$, this establishes the proposition.
\end{proof}

We are now in a position to establish the main result of this section.

\begin{thm}\label{thm:nae}
{\sc Monotone $\forall \exists$ NAE-3-SAT-$(1, 3)$} is $\Pi_2^P$-complete if the Boolean formula is linear and each clause contains at most one universal variable. 
\end{thm}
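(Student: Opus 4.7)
The plan is to reduce from {\sc Monotone $\forall \exists$ NAE-3-SAT-$(1,4)$} with at most one universal variable per clause, whose $\Pi_2^P$-completeness was just established in Proposition~\ref{prop:nae}. Starting from such an instance $\Phi$, I would build a polynomial-size instance $\Phi'$ of the problem in the statement that is equivalent to $\Phi$. Membership in $\Pi_2^P$ is immediate because the problem is a syntactic restriction of {\sc $\forall \exists$ NAE-3-SAT}.

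The core transformation simultaneously cuts each existential variable's appearance count from four to three and enforces linearity. For each existential variable $y$ of $\Phi$, I would introduce four fresh copies $y_1,y_2,y_3,y_4$, replace the $k$-th appearance of $y$ by $y_k$, and glue the copies together via a 4-cycle of linear monotone equality gadgets $\operatorname{EQ}_{\mathrm{lin}}(y_1,y_2)$, $\operatorname{EQ}_{\mathrm{lin}}(y_2,y_3)$, $\operatorname{EQ}_{\mathrm{lin}}(y_3,y_4)$, $\operatorname{EQ}_{\mathrm{lin}}(y_4,y_1)$. Provided $\operatorname{EQ}_{\mathrm{lin}}(x,y)$ is designed so that its two arguments each occur exactly once inside it, every copy $y_k$ ends up in exactly three clauses: one original clause plus the two incident equality gadgets. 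Universal variables are untouched, so each still appears once and each clause still contains at most one universal. Monotonicity is preserved because the gadgets introduce only positive literals.

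The main obstacle is that the gadgets $\operatorname{NE_{aux}}$, $\operatorname{EQ}$, and $\operatorname{NE}$ from Section~\ref{sec:enforcer-2} are inherently non-linear---already the clauses $(x\vee y\vee a)$ and $(x\vee y\vee b)$ inside $\operatorname{NE_{aux}}(x,y)$ share two variables. I would therefore build a replacement linear monotone equality gadget $\operatorname{EQ}_{\mathrm{lin}}(x,y)$ whose own clauses pairwise share at most one variable and whose auxiliary variables are fresh for each instance, so that distinct gadgets cannot interact and destroy linearity. A natural route is to duplicate each ``doubled'' variable pair in the existing gadgets by private copies tied together via linear non-equality sub-gadgets, along the lines of the linear construction of Darmann and D\"ocker~\cite{darmann19} in the unquantified setting. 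Any internal auxiliary variable whose count still falls short of three would be topped up with fresh copies of $\operatorname{P1}(v)$, exactly as in Step~4 of the proof of Proposition~\ref{prop:nae}.

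Correctness would follow the template already used in Proposition~\ref{prop:nae}. In the forward direction, given a nae-satisfying assignment of $\Phi$, setting every copy $y_k$ to the value of $y$ nae-satisfies each copied clause, and the gadget variables can be extended via the linear analogues of Observations~\ref{ob:EQ} and~\ref{ob:padding-gadget}. Conversely, a nae-satisfying assignment of $\Phi'$ forces $y_1=y_2=y_3=y_4$ around each 4-cycle of equality gadgets, so reading off any single copy per variable yields a nae-satisfying assignment of $\Phi$. Linearity of $\Phi'$ would finally be checked locally: original clauses are pairwise linear because their $y_k$-labels are all distinct; gadget-internal clauses are linear by design of $\operatorname{EQ}_{\mathrm{lin}}$ and $\operatorname{P1}$; and any two clauses from different gadgets, or from a gadget and an original clause, share at most the single $y_k$ at their interface, since every other gadget variable is private to its gadget.
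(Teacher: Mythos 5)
There is a genuine gap at the heart of your proposal: the linear monotone equality gadget $\operatorname{EQ}_{\mathrm{lin}}(x,y)$ that your whole reduction rests on is never constructed, and constructing it (or circumventing the need for it) is precisely the difficulty of this theorem. You correctly identify that $\operatorname{NE_{aux}}$, $\operatorname{EQ}$ and $\operatorname{NE}$ are non-linear, but the ``natural route'' you sketch stays entirely within existential variables, and there is strong evidence this cannot work: a purely existential, monotone gadget in which every variable appears at most three times lives inside the fragment of (unquantified) monotone \textsc{NAE-3-SAT} with at most three appearances per variable, which is decidable in linear time by Porschen et al.\ --- this is exactly why Proposition~\ref{prop:nae} needs four appearances. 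The paper's proof escapes this trap by making the gadget \emph{quantified}: for each existential variable $z_i$ it introduces eight copies $z_{i,1},\ldots,z_{i,8}$, eight auxiliary existentials $e_{i,1},\ldots,e_{i,8}$, and sixteen \emph{fresh universal variables} $u_{i,k},v_{i,k}$, each appearing exactly once, arranged in a cyclic pattern $(z_{i,k}\vee e_{i,k}\vee u_{i,k})\wedge(e_{i,k}\vee z_{i,k+1}\vee v_{i,k})$. When the adversary sets $u_{i,k}=F$ and $v_{i,k}=T$ these clauses become an implication chain forcing all copies equal, while for every other universal assignment the intended extension ($z_{i,k}=z_i$, $e_{i,k}=\overline{z_i}$) still nae-satisfies every gadget clause because each contains a $z$-copy and an $e$-copy with complementary values. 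This use of single-appearance universals as adversarially settable constants is the key idea your proposal is missing.

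A second, concrete error: your padding step cannot use $\operatorname{P1}(v)$. Each of the five internal variables of $\operatorname{P1}$ appears exactly \emph{four} times, which immediately violates the target bound of three appearances per existential variable, and $\operatorname{P1}$ is also not linear (e.g.\ $(x\vee a\vee b)$ and $(a\vee b\vee e)$ share the two variables $a$ and $b$, as do $(a\vee c\vee d)$ and $(b\vee c\vee d)$). The paper avoids any padding gadget here: the appearance counts of the $z_{i,k}$ and $e_{i,k}$ are balanced to exactly three by four extra clauses of the form $(z_{i,5}\vee e_{i,1}\vee e_{i,2})$, which are themselves linear, monotone, and automatically nae-satisfied by the intended assignment. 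Finally, note that linearity of the overall formula also requires checking that two \emph{original} clauses sharing two existential variables become linear after renaming --- the paper observes this holds because distinct appearances of the same variable receive distinct copies; your sketch asserts this but it only goes through once the gadget interfaces expose each copy exactly once, which again depends on the unconstructed gadget.
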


\begin{proof}
Clearly, the decision problem {\sc Monotone $\forall \exists$ NAE-3-SAT-$(1, 3)$} as described in the statement of the theorem is in $\Pi_2^P$. We show $\Pi_2^P$-completeness by a reduction from {\sc Monotone \sc $\forall \exists$ NAE-3-SAT-$(1, 4)$}.  Let
\[
\Phi_1 = \forall Z_1^p \exists Z_{p+1}^n \varphi
\]
be an instance of {\sc Monotone $\forall \exists$ NAE-3-SAT-$(1, 4)$}. By Proposition~\ref{prop:nae}, we may assume that  each clause in $\Phi_1$ contains at most one universal variable.

We start by defining the following four sets of variables which we use to construct an instance $\Phi_2$ of {\sc Montone} $\forall\exists$ {\sc NAE-3-SAT-(1,3)}. Let 
\begin{align*}
&U=\{u_{i,k}: i\in\{p+1,p+2,\ldots,n\}\text{ and } k\in\{1,2,\ldots,8\}\}\text{ and}\\
&V=\{v_{i,k}: i\in\{p+1,p+2,\ldots,n\}\text{ and } k\in\{1,2,\ldots,8\}\}
\end{align*}
be two sets of universal variables, and let 
\begin{align*}
&E=\{e_{i,k}: i\in\{p+1,p+2,\ldots,n\}\text{ and } k\in\{1,2,\ldots,8\}\}\text{ and}\\
&Z=\{z_{i,k}: i\in\{p+1,p+2,\ldots,n\}\text{ and } k\in\{1,2,\ldots,8\}\}
\end{align*}
be two sets of existential variables. Now, for each $i \in \{p+1, p+2, \ldots, n\}$, we replace the $j$-th appearance of $z_i$ with $z_{i, j}$ for all $j \in \{1, 2, 3, 4\}$ and introduce the clauses
\begin{align*}
&\bigwedge_{k = 1}^7 \left ( (z_{i, k} \vee e_{i, k} \vee u_{i, k}) \wedge (e_{i, k} \vee z_{i, k + 1} \vee v_{i, k})\right ) \wedge\\
&(z_{i, 8} \vee e_{i, 8} \vee u_{i, 8}) \wedge (e_{i, 8} \vee z_{i, 1} \vee v_{i, 8}) \wedge {}\\
&(z_{i,5} \vee e_{i,1} \vee e_{i,2}) \wedge (z_{i,6} \vee e_{i,7} \vee e_{i,8}) \wedge (z_{i,7} \vee e_{i,3} \vee e_{i,4}) \wedge (z_{i,8} \vee e_{i,5} \vee e_{i,6}).
\end{align*}
Furthermore, we append each element in $U\cup V$ to the list of universal variables, append each element in $E\cup Z$ to the list of existential variables, and delete the obsolete variables $Z_{p+1}^n $. Let $\Phi_2$ denote the resulting formula. By construction, it is straightforward to check that $\Phi_2$ is an instance of {\sc Monotone $\forall \exists$ NAE-3-SAT-$(1, 3)$} with at most one universal variable per clause and whose set of variables is $$U\cup V\cup Z_1^p\cup E \cup Z.$$ Moreover, if any pair of clauses in $\Phi_1$ have two variables in common, then both are existential variables and, hence, again by construction, $\Phi_2$ is linear. Since $\Phi_2$ has all desired properties and the size of $\Phi_2$ is polynomial in the size of $\Phi_1$, it remains to show that the following statement holds.

\begin{sublemma}
$\Phi_1$ is a yes-instance of {\sc Monotone $\forall \exists$ NAE-3-SAT-$(1, 4)$} if and only if $\Phi_2$ is a yes-instance of {\sc Monotone $\forall \exists$ NAE-3-SAT-$(1, 3)$}. 
\end{sublemma}

\begin{proof}
First, suppose that $\Phi_1$ is a yes-instance of  {\sc Monotone $\forall \exists$ NAE-3-SAT-$(1, 4)$}. Let $\beta_1$ be a truth assignment for $Z_1^p\cup Z_{p+1}^n$ that nae-satisfies~$\Phi_1$. Obtain a truth assignment $\beta_2$ for $Z_1^p\cup E\cup Z$ as follows:

\begin{enumerate}[label=(\roman*), noitemsep]
\item set $\beta_2(z_i) = \beta_1(z_i)$ for each $i \in \{1,2,\ldots,p\}$;
\item set $\beta_2(z_{i,k}) = \beta_1(z_i)$ for each $i \in \{p+1,p+2,\ldots,n\}$ and $k\in\{1,2,\ldots,8\}$;
\item set $\beta_2(e_{i,k}) = \overline{\beta_1(z_i)}$ for each $i \in \{p+1,p+2,\ldots,n\}$ and $k\in\{1,2,\ldots,8\}$.
\end{enumerate}
It is easily checked that every truth assignment for $U\cup V\cup Z_1^p\cup E \cup Z$ that extends~$\beta_2$ nae-satisfies $\Phi_2$, and thus $\Phi_2$ is a yes-instance of {\sc Monotone $\forall \exists$ NAE-3-SAT-$(1, 3)$}.

Second, suppose that $\Phi_2$ is a yes-instance {\sc Monotone $\forall \exists$ NAE-3-SAT-$(1, 3)$}. Let $\beta_2$ be a truth assignment that nae-satisfies $\Phi_2$ such that $\beta_2(u_{i,k})=F$ and $\beta_2(v_{i,k})=T$ for each $i\in\{p+1,p+2,\ldots,n\}$ and $k\in\{1,2,\ldots,8\}$. Since $u_{i,k}$ and $v_{i,k}$ are universal variables, $\beta_2$ exists. We next show that $\beta_2$ satisfies the property  $$\beta_2(z_{i,1})=\beta_2(z_{i,2})=\beta_2(z_{i,3})=\beta_2(z_{i,4})$$ for each $i\in\{p+1,p+2,\ldots,n\}$. To this end, consider the subset of clauses
$$(z_{i, 8} \vee e_{i, 8} \vee F) \wedge (e_{i, 8} \vee z_{i, 1} \vee T)\wedge \bigwedge_{k = 1}^7 \left ( (z_{i, k} \vee e_{i, k} \vee F) \wedge (e_{i, k} \vee z_{i, k + 1} \vee T)\right )$$
of $\Phi_2$, where the universal variables are set according to $\beta_2$. If $\beta_2(z_{i,1})=F$, then the clause  $(z_{i, 1} \vee e_{i, 1} \vee F)$ implies that $\beta_2(e_{i, 1})=T$ and, hence, by the aforementioned subset of clauses, $\beta_2(z_{i,j})=F$ for each $j\in\{1,2,3,4\}$. Otherwise, if $\beta_2(z_{i,1})=T$, then the clause  $(e_{i, 8} \vee z_{i, 1} \vee T)$ implies that $\beta_2(e_{i, 8})=F$ and, hence, again by the aforementioned subset of clauses, $\beta_2(z_{i,j})=T$ for each $j\in\{1,2,3,4\}$. It now follows that the truth assignment $\beta_1$ for $Z_1^p\cup Z_{p+1}^n$ with
\begin{enumerate}[label=(\roman*), noitemsep]
\item $\beta_1(z_i) = \beta_2(z_i)$ for each $i \in \{1,2,\ldots,p\}$ and 
\item $\beta_1(z_{i}) = \beta_2(z_{i,1})$ for each $i \in \{p+1,p+2,\ldots,n\}$
\end{enumerate}
nae-satisfies $\Phi_1$, and so $\Phi_1$ is a yes-instance of  {\sc Monotone $\forall \exists$ NAE-3-SAT-$(1, 4)$}.
\end{proof}

This completes the proof of Theorem~\ref{thm:nae}.
\end{proof}

\subsection{Restrictions that alleviate the complexity of {\sc Monotone $\forall \exists$ NAE-3-SAT-$(s,t)$}}

In this section, we discuss variants of  {\sc Monotone $\forall \exists$ NAE-3-SAT-$(s,t)$} that are in co-NP or solvable in polynomial time. More precisely, we investigate the complexity of  {\sc Monotone $\forall \exists$ NAE-3-SAT-$(s,2)$}. First note that {\sc Monotone $\forall \exists$ NAE-3-SAT-$(0,t)$} is a special case of {\sc NAE-3-SAT} and therefore in NP. Furthermore, {\sc Monotone $\forall \exists$ NAE-3-SAT-$(s,1)$} can be solved in polynomial time since an instance of this problem is a yes-instance if and only if each clause contains at least one existential variable. Now consider the following decision problem that allows for a set of variables and the set $\{F,T\}$ of constants.

\noindent {\sc Monotone-with-Constants-NAE-3-SAT-$t$} ({\sc MC-NAE-3-SAT-$t$})\\
\noindent{\bf Input.} A positive integer $t$, a set $V=\{x_1,x_2,\ldots,x_n\}$ of variables and a monotone Boolean formula $$\bigwedge_{j=1}^m C_j$$ such that each clause contains exactly three distinct elements in $V\cup\{F,T\}$ and, amongst the clauses, each element in $V$
appears exactly $t$ times.  \\
\noindent{\bf Question.} Does there exist a truth assignment $\beta \colon V  \rightarrow \{T, F\}$ such that each clause of the formula is nae-satisfied?

\noindent Boolean formulas that include variables and the two constants $F$ and $T$ were, for example, previously considered in the context of NAE-3-SAT~\cite[p.\,275f]{bonet12}. In particular, let $\varphi$ be an instance of NAE-3-SAT that allows for constants. Bonet et al.\ ~\cite{bonet12} showed that, given a solution to $\varphi$, it is NP-complete to decide if a second solution to $\varphi$ exists. This result was in turn used to prove that two problems arising in computational biology are NP-complete. Note that in the special case in which $\varphi$ does not contain any constant, a second solution can always be obtained from a given  truth assignment that nae-satisfies $\varphi$ by simply interchanging $T$ and $F$. 

We next show that  {\sc MC-NAE-3-SAT-$t$} is solvable in polynomial time if $t=2$.

\begin{proposition}\label{thm:mcNAE3Sat2}
\textsc{MC-NAE-3-SAT-2} is in {\em P}. 
\end{proposition}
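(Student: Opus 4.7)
The plan is to reduce \textsc{MC-NAE-3-SAT-2} in polynomial time to the classical degree-constrained subgraph problem on a (multi)graph, which is solvable in polynomial time by standard matching techniques.

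First, I would classify each clause according to how many constants it contains. Because the three elements of every clause are pairwise distinct and come from $V \cup \{F, T\}$, exactly four cases can arise: a clause with both $F$ and $T$ together with one variable (automatically nae-satisfied and hence vacuous); a clause $(T, x, y)$ on two variables (nae-satisfied iff not both of $x, y$ are $T$); a clause $(F, x, y)$ on two variables (nae-satisfied iff at least one of $x, y$ is $T$); and a clause $(x, y, z)$ on three variables (nae-satisfied iff $x, y, z$ are not all equal). In every case the constraint reduces to requiring that the number of in-clause variables set to $T$ lies in a fixed subset of $\{0, 1, 2, 3\}$ that is in fact an interval.

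Next, exploiting the hypothesis $t = 2$, I would build the dual multigraph $G$ whose nodes are the clauses and whose edges are the variables: since each variable $v$ appears in exactly two clauses $C_1, C_2$, it contributes a single edge joining $C_1$ and $C_2$ (no self-loop arises because the elements inside a clause are pairwise distinct). A truth assignment $\beta \colon V \to \{T, F\}$ is then in bijection with the edge subset $S \subseteq E(G)$ consisting of the variables assigned to $T$, and the case analysis above expresses nae-satisfaction as a local degree condition: $d_S(v) \in \{0, 1\}$ at nodes corresponding to clauses of the form $(F, T, x)$ or $(T, x, y)$, and $d_S(v) \in \{1, 2\}$ at nodes corresponding to clauses of the form $(F, x, y)$ or $(x, y, z)$.

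Finally, since each allowed degree set is an interval $[l_v, u_v]$, the instance is a yes-instance of \textsc{MC-NAE-3-SAT-2} if and only if $G$ admits an $(l, u)$-factor, i.e., a spanning subgraph with $l_v \leq d_S(v) \leq u_v$ at every node. The $(l, u)$-factor problem on a multigraph is polynomially solvable via a standard reduction to $b$-matching and Edmonds' general matching algorithm. The main obstacle I anticipate is book-keeping: one has to be careful with multi-edges (which can arise when two clauses share more than one variable) and with the degenerate type-1 nodes that impose no constraint, and then verify that the reduction is faithful in both directions. Once these routine checks are made, the stated polynomial-time algorithm follows.
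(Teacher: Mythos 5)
Your argument is correct, but it takes a genuinely different route from the paper's proof. You encode the instance directly as a degree-constrained subgraph problem on the multigraph whose nodes are the clauses and whose edges are the variables (each variable appears in exactly two distinct clauses, and never twice within one clause since the elements of a clause are pairwise distinct, so there are no self-loops), observe that nae-satisfaction of each clause type translates into an interval constraint $[l_v,u_v]$ on the number of selected incident edges, and then invoke the polynomial-time solvability of the $(g,f)$-factor problem on multigraphs via standard matching techniques; your case analysis of the four clause types and the resulting degree intervals $[0,1]$ and $[1,2]$ is accurate, and the correspondence between truth assignments and edge subsets is indeed a bijection. The paper proceeds differently: it first runs an explicit simplification algorithm (discharging clauses with two constants, pairs of clauses sharing two variables, and clauses containing a variable with a single appearance), arrives at a linear formula with at most one constant per clause, and then applies the characterization of Porschen et al.\ for linear monotone formulas in which every variable appears exactly twice --- the clause graph admits the required edge $2$-coloring unless some connected component is an odd cycle, and those exceptional components are shown to be nae-satisfiable by an explicit alternating assignment that exploits the one constant present in each of their clauses. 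Your reduction is more uniform and sidesteps both the preprocessing and the odd-cycle case analysis, at the price of invoking the heavier machinery of general degree-constrained subgraphs; the paper's argument is more elementary and self-contained. One small point worth making explicit in your write-up: clauses consisting of three constants cannot occur, because a clause is a set of three distinct elements of $V\cup\{F,T\}$ and there are only two constants, so your four clause types are indeed exhaustive.
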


\begin{proof}
Let $\varphi=\bigwedge_{j=1}^m C_j$ be an instance of \textsc{MC-NAE-3-SAT-2} over a set $V\cup\{F,T\}$ of variables and constants, where $V=\{x_1,x_2,\ldots,x_n\}$. To establish the proposition, we adapt ideas presented by Porschen et al.~\cite{porschen04} who developed a linear-time algorithm to decide if an instance of \textsc{NAE-SAT}, i.e. a Boolean formula in CNF, is nae-satisfiable if each variable appears at most twice.

Using the notation $C_j=(\ell_{j,1}\vee\ell_{j,2}\vee\ell_{j,3})$ to denote the $j$-th clause in $\varphi$ for each $j\in\{1,2,\ldots,m\}$, we next present an algorithm to decide whether or not $\varphi$ is a yes-instance of \textsc{MC-NAE-3-SAT-2}. At each step of the algorithm, $\varphi$ is transformed into a simpler Boolean formula.
\begin{enumerate}
\item For each clause $C_j$ with  $\ell_{j,k}=x_i$ for some $i\in\{1,2,\ldots,n\}$ and  $\ell_{j,k'},\ell_{j,k''}\in\{F,T\}$ with $\{k,k',k''\}=\{1,2,3\}$, do the following.
\begin{enumerate}[label=(\Roman*), noitemsep]
\item If $\ell_{j,k'}\ne\ell_{j,k''}$, remove $C_j$ from $\varphi$.
\item If $\ell_{j,k'}=\ell_{j,k''}=F$, remove $C_j$ and reset $\varphi$ to be $\varphi[x_i \mapsto T]$.
\item If $\ell_{j,k'}=\ell_{j,k''}=T$, remove $C_j$ and reset $\varphi$ to be $\varphi[x_i \mapsto F]$.
\end{enumerate}
\item For each pair of variables $x_i, x_{i'} \in V$ with $i \neq i'$ that both appear in two distinct clauses $C_j$ and $C_{j'}$, remove $C_j$ and $C_{j'}$ from $\varphi$. 
\item For each variable $x_i$ that appears in exactly one clause $C_j$, remove $C_j$ from $\varphi$.   
\item For each clause $C_j$ such that $\ell_{j,1},\ell_{j,2}, \ell_{j,3} \in \{T, F\}$, do the following.
\begin{enumerate}[label=(\Roman*), noitemsep]
\item If $\{\ell_{j,1},\ell_{j,2}, \ell_{j,3}\}=\{T, F\}$, remove $C_j$ from $\varphi$.
\item Otherwise, stop and return ``$\varphi$ is a no-instance''.
\end{enumerate}
\item Stop and return ``$\varphi$ is a yes-instance''. 
\end{enumerate}  
The algorithm clearly terminates within polynomial time.  Moreover, as a variable that appears exactly once in a Boolean formula can be assigned to either $T$ or $F$ without affecting any other clause, it is  straightforward to check that each step in the algorithm returns a Boolean formula that is equivalent to $\varphi$. Hence, if a clause contains three equal constants, then the algorithm correctly returns that $\varphi$ is a no-instance in Step 4. Now, suppose that the algorithm returns ``$\varphi$ is a yes-instance''. Let $\varphi'$ be the Boolean formula that is obtained at the end of the last iteration of Step 4(I). Then, $\varphi'$ is an instance of \textsc{MC-NAE-3-SAT-2} such that each clause contains at most one constant and each pair of clauses have at most one variable in common. Hence, $\varphi'$ is linear.  It remains to show that $\varphi'$ is a yes-instance of \textsc{MC-NAE-3-SAT-2}.

Before continuing with the proof, we pause to give an overview of a result established by Porschen et al.~\cite{porschen04}. Let $\psi=\bigwedge_{j=1}^{m'} C'_{j}$ be a linear and monotone Boolean formula where each variable appears exactly twice and each clause contains at least two distinct variables. Furthermore, let $G_{\psi}$ be the {\it clause graph} for $\psi$ whose set of vertices is $\{C_1',C_2',\ldots,C'_{m'}\}$ and, for each pair $j,j'\in\{1,2,\ldots,m'\}$ with $j\ne j'$, there is an edge $\{C'_j, C'_{j'}\}$ in $G_{\psi}$ precisely if $C'_j$ and $C'_{j'}$ have a variable in common. Then $\psi$ is nae-satisfiable if and only if there exists an edge coloring of $G_{\psi}$ that uses exactly two colors $c_1$ and $c_2$ such that each vertex is incident to an edge that is colored $c_1$ and incident to an edge that is colored $c_2$. Moreover, if $\psi$ does not have a connected component that is isomorphic to a cycle of odd length, then such an edge coloring exists.

We now continue with the proof of the proposition. Let $\varphi''$ be the  Boolean formula obtained from $\varphi'$ by omitting all constants. By construction, each clause in $\varphi''$ contains either two or three distinct variables. It follows that, if $\varphi''$ is nae-satisfiable, then $\varphi'$ is nae-satisfiable. Let  $G_{\varphi''}$ be the clause graph for $\varphi''$. First, assume that $G_{\varphi''}$ does not have a connected component that is isomorphic to a cycle of odd length. Then it immediately follows from the result by Porschen et al.~\cite{porschen04} that $\varphi''$ is nae-satisfiable and, hence, $\varphi'$ is also nae-satisfiable. Second, assume that $G_{\varphi''}$ has a connected component that is isomorphic to a cycle of odd length. Then the vertices of this component are of the form
\[
(x_{i_1}\vee x_{i_2}), (x_{i_2}\vee x_{i_3}), \ldots, (x_{i_{p-1}}\vee x_{i_p}), (x_{i_p}\vee x_{i_1}),
\]
where $p\ge 3$ is an odd integer and $x_{i_j}\in V$. In other words,
\[
\mathcal{C_{\varphi''}}= \bigwedge_{j = 1}^{p-1} (x_{i_j} \vee x_{i_{j+1}})\wedge (x_{i_p} \vee x_{i_1}),  
\]
is contained in $\varphi''$. Although $\mathcal{C_{\varphi''}}$ is not nae-satisfiable, we next show that the corresponding clauses $\mathcal{C_{\varphi'}}$ in $\varphi'$ are nae-satisfiable since each such clause contains exactly one constant. 

Consider
\[
\mathcal{C_{\varphi'}} =  \bigwedge_{j = 1}^{p-1} (x_{i_j} \vee x_{i_{j+1}} \vee b_j)\wedge (x_{i_p} \vee x_{i_1} \vee b_p),  
\]
where $b_j \in \{T, F\}$  for each $j\in\{1,2,\ldots,p\}$. Let $\beta$ be the following truth assignment for $\{x_{i_1},x_{i_2},\ldots,x_{i_p}\}$:
\begin{enumerate}[label=(\roman*), noitemsep]
\item set $\beta(x_{i_j}) = \overline{b_p}$ for each $j \in \{1,2,\ldots,p\}$ with $j$ being odd;
\item set $\beta(x_{i_j}) = b_p$ for each $j \in \{1,2,\ldots,p\}$ with $j$ being even.
\end{enumerate}
It follows that $\beta$ nae-satisfies $\mathcal{C_{\varphi'}}$. An analogous argument can be applied to every other connected component in $G_{\varphi''}$ that is isomorphic to a cycle of odd length. Furthermore, it again follows from Porschen et al.'s result~\cite{porschen04} that the edge set of each connected component in $G_{\varphi''}$ that is not isomorphic to a cycle of odd length corresponds to a subset of clauses in $\varphi''$ that is nae-satisfiable. Altogether, $\varphi''$ is nae-satisfiable and, hence, $\varphi'$ is also nae-satisfiable. This completes the proof of the proposition.
\end{proof}

We next establish three corollaries that pinpoint the complexity of  {\sc Monotone $\forall \exists$ NAE-3-SAT-$(s,2)$}.

\begin{cor}
{\sc Monotone $\forall \exists$ NAE-3-SAT-$(s, 2)$} is in {\em co-NP} for any fixed positive integer $s$. 
\end{cor}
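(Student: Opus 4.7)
The plan is to exhibit a short certificate for every no-instance and verify it in polynomial time, which places the problem in co-NP. Concretely, a certificate for a no-instance of {\sc Monotone $\forall \exists$ NAE-3-SAT-$(s,2)$} will be a truth assignment $\gamma$ to the universal variables $X_1^p$: we must verify that, after substituting $\gamma$ into the Boolean formula $\varphi$, the resulting formula is not nae-satisfiable by any truth assignment to the existential variables. The certificate has size $p$, which is polynomial in the size of the input, so only the verification step needs attention.

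For verification, first I would substitute the constants dictated by $\gamma$ into $\varphi$, obtaining a monotone formula $\varphi_\gamma$ over the existential variables $X_{p+1}^n$ together with the constants $\{T,F\}$. Because the original formula is monotone and each clause contains exactly three distinct variables, each clause of $\varphi_\gamma$ contains three distinct elements of $X_{p+1}^n\cup\{T,F\}$; moreover, the substitution leaves the number of appearances of each existential variable equal to $2$. Now I would perform the standard propagation used in the proof of Proposition~\ref{thm:mcNAE3Sat2}: discard any clause that is already nae-satisfied (i.e.\ contains both $T$ and $F$), detect any clause consisting of three equal constants (in which case $\varphi_\gamma$ is trivially not nae-satisfiable and the certificate is accepted), and for any clause whose two constants agree, force the unique existential literal in that clause to the opposite truth value and propagate. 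Propagation may either produce a contradiction (certificate accepted) or terminate with a residual monotone NAE-SAT instance on a set of existential variables in which each remaining variable still appears at most twice.

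At this point I would invoke the linear-time algorithm of Porschen et al.~\cite{porschen04} for NAE-SAT instances in which each variable appears at most twice (this is exactly the ingredient used in Proposition~\ref{thm:mcNAE3Sat2}) to decide whether the residual formula is nae-satisfiable. The certificate $\gamma$ is accepted precisely when this residual formula is \emph{not} nae-satisfiable. Since substitution, propagation, and the subroutine of Porschen et al.\ all run in polynomial time, the entire verification is polynomial in the input size, establishing membership in co-NP.

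The main technical point to be careful about is that substitution of $\gamma$ may yield clauses carrying one or two constants, so $\varphi_\gamma$ is not literally an instance of {\sc MC-NAE-3-SAT-2} with exactly two appearances per variable; some existential variables may become effectively forced or may end up with fewer than two free appearances after a clause is either discarded as nae-satisfied or processed by propagation. This is why the proof routes through the more general polynomial-time NAE-SAT result of Porschen et al.\ for formulas with at most two variable occurrences, rather than directly through Proposition~\ref{thm:mcNAE3Sat2}. Once the preprocessing is in place, the invocation of that polynomial-time algorithm is immediate and the co-NP membership follows; note that the argument is oblivious to the fixed value of $s$, which enters only through the polynomial input size.
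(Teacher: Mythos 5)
Your proposal is correct and takes essentially the same route as the paper: the certificate for a no-instance is a truth assignment to the universal variables, and the verification substitutes it into the formula and runs the polynomial-time algorithm of Proposition~\ref{thm:mcNAE3Sat2} on the resulting \textsc{MC-NAE-3-SAT-2} instance. The only divergence is that you unpack that algorithm and hand the residual formula to Porschen et al.\ directly; since that residual may still contain clauses carrying a single constant (exactly the situation resolved by the odd-cycle analysis inside Proposition~\ref{thm:mcNAE3Sat2}), it is cleaner --- and is what the paper does --- to invoke the proposition's algorithm as a black box.
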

\begin{proof}
A no-instance of {\sc Monotone $\forall \exists$ NAE-3-SAT-$(s, 2)$} can be identified by taking an assignment of the universal variables and applying the algorithm presented in Proposition~\ref{thm:mcNAE3Sat2} to verify in polynomial time whether or not the resulting \textsc{MC-NAE-3-SAT-2} Boolean formula (with omitted lists of universal and existential quantifies) is not nae-satisfiable.
\end{proof}

\begin{cor}\label{cor:NAE-trivial}
{\sc Monotone $\forall \exists$ NAE-3-SAT-$(s, 2)$} is trivially a yes-instance for any fixed positive integer $s$ if each clause contains at most one universal variable. 
\end{cor}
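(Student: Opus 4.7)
The plan is to reduce the statement to Proposition~\ref{thm:mcNAE3Sat2} by observing that the restriction ``at most one universal variable per clause'' interacts extremely well with the algorithm presented there. Fix an instance $\Phi=\forall X_1^p\exists X_{p+1}^n\varphi$ of {\sc Monotone $\forall\exists$ NAE-3-SAT-$(s,2)$} in which each clause of $\varphi$ contains at most one universal variable. The goal is to show that, for every truth assignment $\beta_u$ to $X_1^p$, the residual formula over $X_{p+1}^n$ is nae-satisfiable.

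First, I would fix an arbitrary truth assignment $\beta_u\colon X_1^p\to\{T,F\}$ and substitute each universal literal by its value under $\beta_u$ in $\varphi$, obtaining a Boolean formula $\psi$. Since each clause of $\varphi$ originally contained three distinct variables with at most one being universal, every clause of $\psi$ contains three distinct elements of $X_{p+1}^n\cup\{T,F\}$ with at most one being a constant. Because substitution leaves existential literals untouched, every existential variable still appears exactly twice in $\psi$. Hence $\psi$ is an instance of \textsc{MC-NAE-3-SAT-$2$} with the extra property that every clause contains at most one constant.

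Next, I would trace the algorithm from the proof of Proposition~\ref{thm:mcNAE3Sat2} on $\psi$. Step~1 is triggered only by a clause containing two constants, so under our invariant it never fires; Steps~2 and~3 only delete clauses, so they preserve the invariant that no clause contains more than one constant. Consequently, when Step~4 is reached, no clause can have three constants, so case (II) of Step~4 is never invoked and the algorithm terminates at Step~5 returning ``yes-instance''. By the correctness of that algorithm (established in the proof of Proposition~\ref{thm:mcNAE3Sat2}), $\psi$ is nae-satisfiable.

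Since $\beta_u$ was arbitrary, every truth assignment for $X_1^p$ extends to a truth assignment of the existential variables that nae-satisfies $\varphi$, so $\Phi$ is trivially a yes-instance. The only point that requires care is verifying that the ``at most one constant per clause'' invariant is preserved through the algorithm; this is immediate because Step~1 is the sole step that introduces constants and its precondition is incompatible with the invariant.
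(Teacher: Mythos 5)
Your proof is correct and takes essentially the same route as the paper: both reduce the claim to the algorithm and analysis of Proposition~\ref{thm:mcNAE3Sat2}, using the fact that ``at most one universal variable per clause'' translates into ``at most one constant per clause,'' which rules out the only branch where the algorithm can answer no. The only (immaterial) difference is that you substitute the universal assignment before simplifying, whereas the paper performs the clause-deletion steps symbolically on the quantified formula and then invokes the odd-cycle argument from that proposition's proof.
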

\begin{proof}
Let $\Phi$ be an instance of {\sc Monotone $\forall \exists$ NAE-3-SAT-$(s, 2)$} such that each clause contains at most one universal variable. We follow ideas that are similar to those presented in the algorithm described in the proof of Proposition~\ref{thm:mcNAE3Sat2}. First, if there are two existential variables that both appear in two distinct clauses  $C_j$ and $C_{j'}$, obtain a new Boolean formula by removing $C_j$ and $C_{j'}$ from $\Phi$. Repeat this step until no such pair of variables remains. Then, if there is an existential variable that appears in exactly one clause $C_j$,  obtain a new Boolean formula by removing $C_j$. Similar to the proof of Proposition~\ref{thm:mcNAE3Sat2}, it follows that the resulting Boolean formula, say $\Phi'$, is an instance of {\sc Monotone $\forall \exists$ NAE-3-SAT-$(s, 2)$} such that each clause contains at most one universal variable and the formula is linear. Moreover, $\Phi$ is a yes-instance if and only if $\Phi'$ is a yes-instance. If $\Phi'$ is empty, then $\Phi$ is a yes-instance by correctness of the applied transformations. Otherwise, it follows from the properties of~$\Phi'$ and the proof of Proposition~\ref{thm:mcNAE3Sat2} that $\Phi'$ and, hence, $\Phi$ are yes-instances.     
\end{proof}

\begin{cor}
{\sc Monotone $\forall \exists$ NAE-3-SAT-$(1, 2)$} is in {\em P}.
\end{cor}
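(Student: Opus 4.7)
The plan is to design a polynomial-time algorithm that combines preprocessing with an adversarial-forcing propagation, building on Proposition~\ref{thm:mcNAE3Sat2} and Corollary~\ref{cor:NAE-trivial}. First, I return NO if any clause contains three universal variables (the all-$T$ universal assignment blocks nae-satisfaction). Since each existential appears in exactly two clauses, I also return NO whenever some existential $x$ appears in two clauses that both contain two universal variables: the adversary can independently force $x = F$ through one clause and $x = T$ through the other. After these two checks, every existential appears in at most one clause with two universal variables.

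Next, for each existential $x$ and each value $v \in \{T, F\}$, I compute whether the adversary can force $x = v$, keeping track of which universal variables are used. A $2$-universal clause $(u \vee v' \vee x)$ directly makes $x$ forceable to either value (via $\beta_u(u)=\beta_u(v')=F$ or $=T$). Propagation through the remaining clauses is then computed by fixpoint iteration: a $0$-universal clause $(x \vee y \vee z)$ propagates force $\overline{v}$ to $z$ whenever $x$ and $y$ can simultaneously be forced to $v$ via compatible universal choices, and a $1$-universal clause $(u \vee x \vee y)$ propagates force $\overline{v}$ to $y$ whenever $x$ is forceable to $v$ and the adversary also picks $u = v$. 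Because $s = 1$, forces propagated through distinct $2$-universal clauses use disjoint universals and are automatically compatible.

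The instance is then declared NO if the propagation reveals any of the following: (a) an existential simultaneously forceable to both $T$ and $F$ via compatible universal choices; (b) a $0$-universal clause whose three existentials are jointly forceable to a common value; or (c) a $1$-universal clause $(u \vee x \vee y)$ whose two existentials are jointly forceable to a common value $v$ while the adversary sets $u = v$. Otherwise, for every universal assignment a nae-satisfying existential assignment is obtained by first committing to the propagated forces and then simplifying the residual formula; this residual has at most one universal per clause and, by the NO checks above, cannot contain any clause that collapses to three equal constants, so its nae-satisfiability follows from the reasoning used in Proposition~\ref{thm:mcNAE3Sat2} together with Corollary~\ref{cor:NAE-trivial}.

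The main obstacle is arguing that the "compatibility" of adversarial choices can be tracked efficiently throughout propagation. The key structural fact, afforded by $s = 1$, is that the two universals of any $2$-universal clause are distinct from all other universals of the formula, so forces flowing through different $2$-universal clauses are automatically combinable. This collapses the compatibility test to a polynomial-time reachability/union-find computation on an auxiliary graph whose nodes are existentials and whose edges record the $2$-universal clauses responsible for each basic force, keeping the overall running time polynomial.
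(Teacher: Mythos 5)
Your algorithm is sound in substance and, once unwound, computes exactly the same closure as the paper's, but the two write-ups take genuinely different routes. The paper never tracks truth values or compatibility at all: it observes that a clause $(x\vee u\vee u')$ with $u,u'$ universal and each appearing only once is equivalent to quantifying $x$ universally, so it deletes that clause, promotes $x$ to a universal in its unique other clause, and repeats; the answer is ``no'' precisely when some clause accumulates three (original or promoted) universals, and ``yes'' otherwise by Corollary~\ref{cor:NAE-trivial}. Your forceability fixpoint is the semantic shadow of this syntactic rewriting: your base rule is the promotion at a $2$-universal clause, your $1$- and $0$-universal propagation rules are promotions at clauses that have acquired one or two promoted universals, and your conditions (a)--(c) enumerate the ways a clause can end up with three universals. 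What the promotion trick buys is that ``forceable to either value, sign chosen freely by the adversary'' is encoded by the quantifier itself, so the two most delicate ingredients of your version --- value bookkeeping and compatibility of universal choices --- simply do not arise.

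Two steps of your argument are under-developed and would need to be filled in. First, your compatibility argument only mentions the universals of the root $2$-universal clauses, but a force derivation also consumes the universal of every $1$-universal clause it passes through, as well as intermediate forced existentials; you must argue that two derivations feeding the same clause cannot share any of these (this does hold --- each universal lies in exactly one clause, and each forced existential has only one clause besides the one that forced it --- but it has to be said). Second, the ``yes'' direction does not follow merely from ``no clause collapses to three equal constants'': the algorithm of Proposition~\ref{thm:mcNAE3Sat2} performs its own unit propagation (its Step~1), which can create new all-constant clauses downstream, so you must argue that your fixpoint already saturates all such propagation, i.e., that in the residual formula no clause has two equal constants and a free variable, before invoking that proposition on the residual. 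Both gaps are fillable, but the paper's formulation avoids them entirely.
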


\begin{proof}
Let $\Phi$ be an instance of {\sc Monotone $\forall \exists$ NAE-3-SAT-$(1, 2)$}. To decide whether or not $\Phi$ is a yes-instance, we apply the following algorithm to $\Phi$.
\begin{enumerate}
\item If there exists a clause that contains three distinct universal variables, then stop and return ``$\Phi$ is a no-instance''.
\item For a clause $C_j$  that contains one existential variable, say $x$, and two distinct universal variables, say $u$ and $u'$, let $C_{j'}$ be the unique clause that contains the second appearance of $x$. Then remove $C_j$ and turn $x$ in $C_{j'}$  into a new universal variable. 
\begin{enumerate}[label=(\Roman*), noitemsep]
\item If $C_{j'}$ now contains three distinct universal variables, stop and return ``$\Phi$ is a no-instance''.
\item Otherwise, repeat until there are no clauses with two universal variables.
\end{enumerate}
\item Stop and return ``$\Phi$ is a yes-instance''.
\end{enumerate}
Since each universal variable appears exactly once in $\Phi$ and each  existential variable appears exactly twice in~$\Phi$, it follows that the Boolean formula obtained after each iteration of Step 2 is an instance of {\sc Monotone $\forall \exists$ NAE-3-SAT-$(1, 2)$}. Therefore, if the algorithm eventually produces a Boolean formula, $\Phi'$ say, then $\Phi’$ has at most one universal variable in each clause and, by Corollary~\ref{cor:NAE-trivial}, $\Phi'$ is a yes-instance. Hence, to see that the algorithm works correctly, it suffices to show that an iteration of Step~2 preserves yes-instances. Suppose that $\Phi_1$ is the quantified Boolean formula at the start of an iteration of Step~2 and $C_j=(x\vee u\vee u')$ is a clause in $\Phi_1$ as described in Step~2. Let $\beta$ be a truth assignment that nae-satisfies $\Phi$. If $\beta(u) = \beta(u') = F$, then it follows that $\beta(x) = T$. On the other hand, if $\beta(u) = \beta(u') = T$, then this implies that $\beta(x) = F$. It now follows that the Boolean formula, $\Phi_2$ say, obtained by turning $x$ into a universal variable is also a yes-instance. Conversely, by reversing this argument, if $\Phi_2$ is a yes-instance, then $\Phi_1$ is a yes-instance. Thus Step~2 preserves yes-instances. 
We now establish the corollary by noting that the described algorithm has a running time that is polynomial in the size of $\Phi$. 
\end{proof}

\noindent{\bf Acknowledgements.} The third and fourth authors thank the New Zealand Marsden Fund for their financial support.

\end{document}